\title{Relaxing Common Belief for Social Networks} 
\titlerunning{Relaxing Common Belief} 
\author{Noah Burrell}{University of Michigan, Ann Arbor, USA}{burrelln@umich.edu}{https://orcid.org/0000-0003-3448-080X}{}
\author{Grant Schoenebeck}{University of Michigan, Ann Arbor, USA}{schoeneb@umich.edu}{https://orcid.org/0000-0001-6878-0670}{}
\authorrunning{N. Burrell and G. Schoenebeck} 
\keywords{Social networks, network revolt games, common belief} 
\DeclarePairedDelimiter{\set}{\{}{\}}
\newcommand{\sse}{\subseteq}
\newcommand{\E}{\mathrm{E}}
\begin{document}

\maketitle

\begin{abstract}
We propose a relaxation of common belief called \textit{factional belief} that is suitable for the analysis of strategic coordination on social networks. We show how this definition can be used to analyze revolt games on general graphs, including by giving an efficient algorithm that characterizes a structural result about the possible equilibria of such games.

This extends prior work on common knowledge and common belief, which has been too restrictive for use in understanding strategic coordination and cooperation in social network settings.
\end{abstract}

\section{Introduction}
Common knowledge and its analogues (e.g. common belief) are fundamental in the analysis of coordination and cooperation in strategic settings. Informally, common knowledge is the phenomenon that, within a population, everyone knows that a proposition is true, everyone knows that everyone knows that it is true, everyone knows that everyone knows that everyone knows that it is true, and so on, ad infinitum. The existence of common knowledge often shows up as an assumption that underlies some kind of strategic coordination. For example, in the standard game theoretic setting, it is typically assumed that the agents playing a game have common knowledge both of the payoffs of the game and of the rationality of each agent. This assumption undergirds the agents' ability to coordinate on equilibria. 

A rigorous investigation into the consequences of this type of assumption can be traced back to Robert Aumann \cite{Aumann1976}, who showed that if two rational agents have the same prior, and their posteriors for an event are common knowledge, then their posteriors must be equal (i.e. it is impossible for such agents to ``agree to disagree''). Later work considered whether assumptions about common knowledge, and the mathematical consequences of those assumptions, were realistic for trying to explain economic phenomena. In particular, it seemed unrealistic to expect agents to reason about infinite hierarchies of knowledge. Initial attempts to address this critique involved truncating the infinite hierarchy that is required in our informal definition of common knowledge after some large, but finite, number of levels. However, this line of inquiry resulted in the discovery of ``common knowledge paradoxes'' that arose from examples like Ariel Rubinstein's Electronic Mail Game \cite{Rubinstein1989}. Such examples demonstrated that in situations where common knowledge was required for coordination, if the infinite hierarchy were truncated to be any finite hierarchy, strategic agents behaved unexpectedly and unrealistically. That is, they behaved very differently when a proposition was close to common knowledge, in the sense that many---but only finitely many---hierarchies of knowledge were satisfied, than they did when the proposition was actually common knowledge (and therefore infinitely many hierarchies of knowledge were satisfied).

An important breakthrough came from Dov Monderer and Dov Samet in 1989 \cite{Monderer1989}. Inspired by an earlier version of Rubinstein's work \cite{Rubinstein1989} and the work of Aumann \cite{Aumann1976}, Monderer and Samet proposed an alternative to truncating the infinite hierarchy of knowledge: relaxing the requirement of ``knowledge'' at each level in the hierarchy. They introduced the notion of  \textit{common belief}---an analogous concept to common knowledge that is defined by replacing ``everyone knows'' with ``everyone believes with probability (at least) $p$'' at each place where it occurs in our informal definition of common knowledge. Under this definition, they showed that common belief approximates common knowledge in precisely the way that truncating the infinite hierarchy did not: When common knowledge of some proposition is relaxed to common belief, agents behave in approximately the same way they did when they had common knowledge of the proposition. For example, common belief can be applied to generalize Aumann's result that was mentioned above: when rational agents have the same prior and their posteriors about an event are commonly believed, these posteriors must be approximately equal.

In the same paper, they also detailed a key insight that allowed them to address the critique that common knowledge might be too unrealistic to have explanatory power in Economics. They showed that common knowledge has an alternative definition (used implicitly in \cite{Aumann1976}), that is formally equivalent to the more natural definition, but simpler to reason about. This alternative---inspired by the example of public announcements---defines common knowledge in terms of events that are \textit{evident knowledge}: events for which their occurrence implies knowledge of their occurrence within the entire population. Common belief has a similar formally equivalent, alternative definition in terms of events that are \textit{evident belief}: events for which their occurrence implies belief with probability at least $p$ of their occurrence within the entire population.

Here, it is worth noting that our work most heavily borrows from that of Monderer and Samet and, as such, their paper \cite{Monderer1989} is highly recommended both as a primer for what follows here and as an introduction to common knowledge and common belief in general.

In a somewhat orthogonal line of research, Michael Suk-Young Chwe studied common knowledge (not common belief) on networks in the context of a revolt game \cite{Chwe1999, Chwe2000}. In his setting, each agent has a threshold that represents the size of the revolt in which she would agree to participate. For example, an agent with a threshold of 3 would need to know that at least 2 other agents would participate in order to herself agree to participate in the revolt. To decide whether this threshold is met, agents learn the thresholds of their neighbors. They use this information (and knowledge of their 2-hop neighborhood in the graph) to reason about which agents have their thresholds satisfied and consequently whether their own threshold is satisfied. Because agents require absolute certainty in this setting, they only consider their neighbors when conducting this reasoning. 

The strict requirement that absolute certainty is necessary for an agent to revolt implies that for a group of agents to revolt, it must be common knowledge among them that all of their thresholds are satisfied. Consequently, the problem of finding groups of revolting agents reduces to the problem of finding cliques of a certain size in the network. However, an important innovation of Chwe's work, in contrast to the settings described above, is that common knowledge in his setting is a local phenomenon occurring within those cliques, not a global phenomenon occurring within the entire population. 

Both approaches---that of Monderer and Samet and that of Chwe---are limited in their usefulness in social network settings, because their respective notions are unlikely to apply in many natural graphs that are used to model social networks. For example, an Erd\H{o}s-R{\'e}nyi random graph of $n$ vertices with $p = \frac{10}{n}$ will almost certainly be sparse, so population-level phenomenon like common belief will not arise. On the other hand, an Erd\H{o}s-R{\'e}nyi random graph of $n$ vertices with $p = \frac{1}{2}$ would be unlikely to contain large cliques, so the phenomenon of local common knowledge would be severely limited despite the fact that each agent, in seeing about half the graph, should have a lot of information about the entire population and might be expected to be able to coordinate with some large fraction of it. 

We propose that Monderer's and Samet's concept of common belief---itself a relaxation of common knowledge---can be further relaxed to a notion of common belief among a faction (i.e. a minimal-size subset) of the population, while retaining both its mathematical simplicity (in being defined in terms of events that are evident belief) and its economic explanatory power. We refer to this notion as \textit{factional belief}.  

Factional belief is a natural application of the ideas of Monderer and Samet to network settings similar to those of Chwe. It retains from Chwe the idea that common knowledge/belief can occur in only a subset of the population and still motivate their behavior. However, it is not prohibitively strict, such that it would be unlikely to occur endogenously in many natural graphs. Factional belief is not necessarily local---agents can and may need to reason about agents outside of their neighborhood. As such, it does not require cliques.

\subsection{Our Contributions}
\begin{itemize}
    \item We formally define a notion of factional belief (Section~\ref{section:defining}), which can be used to analyze revolt games on general graphs (Section~\ref{section:model}). Prior notions of common knowledge and common belief were insufficient for this type of analysis.
    
    \item We provide an algorithm that characterizes a structural result about the types of equilibria that are possible in instances of the network revolt games described in Section~\ref{section:model} (Section~\ref{section:applying}) and a few natural extensions of those games (Section~\ref{section:broadening}).
    
    \item We show that, surprisingly, it is sufficient for our algorithm to only have access to the degree sequence of the network; additional details of the network beyond the degree sequence are not relevant. 
    
    \item We demonstrate the practical utility of our algorithms from Sections \ref{section:applying} and \ref{section:broadening} by applying them to simulated network data to explore how various parameters of networks and of the model relate to the size of revolts that are supported in equilibria of the network revolt game (Section~\ref{section:experiments}).
\end{itemize}

\subsection{Additional Related Work}
The work of Stephen Morris is particularly notable when surveying the literature related to common knowledge and common belief. Morris, often following the work of Monderer and Samet, has done much theoretical work relating to common knowledge and common belief \cite{Morris1999, Morris2002, Morris2014, Morris1997}. More recently, he has also collaborated with Benjamin Golub to study higher-order reasoning, reminiscent of the infinite hierarchy of reasoning in the initial definitions of common knowledge and common belief, in network settings \cite{Golub2017b, Golub2017a}.

\vspace{1 ex}

Underlying Morris' work, above, and our work is our assertion that common knowledge, common belief, and factional belief are useful, not just as mathematical concepts, but for understanding real social and economic phenomena. Here, we briefly outline some relevant work in applying common knowledge and common belief to explaining such phenomena. One clear direction for future research is to try to similarly apply factional belief as an explanatory tool in these and other settings.

Morris has applied common knowledge/belief to settings such as contagion \cite{Morris2000} and global games \cite{Morris2016}.

Chwe, in 2013, revisited his earlier work and expanded his purview to consider how common knowledge is generated in society \cite{Chwe2013}. He proposed that the importance of rituals in society can be understood from the perspective that rituals create conditions under which common knowledge can be generated. This theory fits nicely with understanding common knowledge through the lens of evident knowledge events. Another potential direction for future research would be to try to mathematically model this ritualistic generation of common knowledge.

Finally, common knowledge, common  belief, and factional belief can be used as tools to help understand the formation and transformation of social norms. In particular, Cristina Bicchieri proposed and meticulously advocated for a definition of social norms of which our definition of factional belief is very reminiscent \cite{Bicchieri2005}.

\section{Defining Factional Belief}
\label{section:defining}
For the following definitions,  let ($\Omega$, $\Sigma$, $\Pr$) be a probability space, where $\Omega$ denotes a set of states, $\Sigma$ denotes a $\sigma$-algebra of events, and $\Pr$ denotes a probability measure on $\Sigma$. Let $I$ denote a set of agents. 

For each $i \in I$, $\Pi_i$ is a partition of $\Omega$ into measurable sets with positive probability. It is, therefore, a countable partition. For $\omega \in \Omega$, the element of $\Pi_i$ that contains $\omega$ is written as $\Pi_i(\omega)$. $\Pi_i$ can be interpreted as the information available to agent $i$. That is, $\Pi_i(\omega)$ is the set of states that are indistinguishable to $i$ when $i$ observes $\omega$. Let $B_i^{p}(E)$ denote the event that agent $i$ believes in event $E$ with probability at least $p$. Formally, we write $B_i^{p}(E) = \set{\omega : \Pr[E | \Pi_i(\omega)] \geq p}$. Lastly, for events $E$ and $F$, we use the notation $E \sse F$ to denote that, whenever $E$ occurs, $F$ occurs.\footnote{Note that, unlike with knowing, it is possible for an agent to believe something that is not true. In particular, $B_i^{p}(E)$ need not be a subset of $E$.}

The following examples are helpful to illustrate this notation. When rolling a fair die, with  equally probably outcomes in the set $\set{1, 2, 3, 4, 5, 6}$ we have $\set{2, 4} \sse \set{\text{Outcome is even}}$. Similarly, when the die has been tossed, but the outcome has not been revealed, we have the event $B_i^{\frac{1}{2}}( \set{\text{Outcome is even}})$ for any agent $i$. 

This notation is borrowed from Monderer and Samet \cite{Monderer1989}, and the rest of the terms and claims in this section are defined and stated, respectively, to be analogous to those from their paper.

\begin{definition} [Evident ($p$, $\mu$)-belief]
\label{def:evidentpm}
An event $E$ is an evident $(p, \ \mu)$-belief if there exists (at least) a $\mu$ fraction such that whenever $E$ occurs, those agents assign a probability of at least $p$ to its occurrence. That is: 
\[
\exists \, J \sse I \text{ with $|J| \geq \mu |I|$ such that for each $j \in J$, } E \sse B_j^{p}(E).
\]
\end{definition}

Following, Monderer and Samet, we first define our notion of factional belief in terms of events that are evident ($p$, $\mu$)-belief. To maintain consistency with their work, we refer to this notion of factional belief as common ($p$, $\mu$)-belief.

\begin{definition}[Common ($p$, $\mu$)-belief]
\label{def:commonpm}
An event $F$ is common $(p, \ \mu)$-belief at $\omega \in \Omega$ if there exists an evident $(p, \ \mu)$-belief event $E$ such that $\omega \in E$ and 
\[
    \exists \, J \sse I \text{ with $|J| \geq \mu |I|$ such that for each $j \in J$, } E \subseteq B_j^{p}(F).
\]
\end{definition}

That is, $F$ is common $(p, \ \mu)$-belief whenever there is an event ($E$) that is an evident $(p, \ \mu)$-belief whose occurrence implies the existence of (at least) a $\mu$ fraction of agents that believe with probability at least $p$ in $F$. Note that any event $E$ that is an evident $(p, \ \mu)$-belief is trivially also common $(p, \ \mu)$-belief (with $F = E$).

Now, an important property shared by common knowledge and common belief is the formal equivalence of their definitions in terms of evident events and their intuitive definitions as infinite hierarchies. Common $(p, \ \mu)$-belief retains this property. In order to state this result formally in Proposition~\ref{prop:one}, we need to formally define the infinite hierarchy, which is done below in Definition~\ref{def:hierarchy}. 

Informally, each level ($n \geq 1$) in this hierarchy, refers to the event that there exists (at least) a $\mu$ fraction of agents who believe with probability (at least) $p$ in the previous level of the hierarchy. The initial level ($n = 0$) is simply the relevant event $F$. So, written out entirely, the full informal definition would be that an event $F$ is common $(p, \ \mu)$-belief if there exists a $\mu$ fraction of agents who believe $F$ with probability $p$, there exists a $\mu$ fraction of agents who believe with probability $p$ that there exists a $\mu$ fraction of agents who believe $F$ with probability $p$, and so on, ad infinitum.

\begin{definition}
\label{def:hierarchy}
For every event $F$ and every $0 \leq p \leq 1$ let 
\[
    E^{p, \mu}(F) = \bigcap_{n \geq 1} F_{\mu}^n,
\] 
where $F_{\mu}^0 = F$ and $F_{\mu}^n$ is the event ``$\exists \, J \sse I \text{ with $|J| \geq \mu |I|$ such that $\forall j \in J$, } B_{j}^p(F_{\mu}^{n-1})$''.
\end{definition}

\begin{proposition}
\label{prop:one}
For every event $F$, every $0 \leq p \leq 1$, and every $0 \leq \mu \leq 1$:
\begin{enumerate}
    \item $E^{p, \mu}(F)$ is an evident $(p, \ \mu)$-belief and $\exists \, J \sse I$  with $|J| \geq \mu |I|$ such that $\forall j \in J$, $E^{p, \mu}(F) \sse B_j^{p}(F)$.
    
    \vspace{1 ex}
    
    \item $F$ is common $(p, \ \mu)$-belief at $\omega$ if and only if $\omega \in E^{p, \mu}(F)$.
\end{enumerate}
\end{proposition}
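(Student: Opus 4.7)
The plan is to adapt the proof of the analogous Proposition 3 in Monderer and Samet, with the added care needed to handle the existential quantifier over subsets $J \subseteq I$ of size at least $\mu|I|$ (whereas M\&S work with the universal ``everyone,'' corresponding to $\mu = 1$). I would first address the easier forward direction of Part 2, and then leverage Part 1 for the converse.

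For the forward direction of Part 2, assume $F$ is common $(p,\mu)$-belief at $\omega$ witnessed by an evident $(p,\mu)$-belief event $E$ (so $\omega \in E$) and a subset $J_F$ with $|J_F| \geq \mu|I|$ such that $E \sse B_j^p(F)$ for all $j \in J_F$. I would then show by induction on $n \geq 1$ that $E \sse F_\mu^n$. The base case $n=1$ is immediate: $J_F$ witnesses $\omega \in F_\mu^1$ for every $\omega \in E$. For the inductive step, I would use monotonicity of $B_j^p(\cdot)$ (if $A \sse B$ then $B_j^p(A) \sse B_j^p(B)$, by monotonicity of conditional probability) together with the evident property, applied to the witnessing subset $J_E$ of $E$: from $E \sse F_\mu^n$ we get $B_j^p(E) \sse B_j^p(F_\mu^n)$, and then $E \sse B_j^p(E) \sse B_j^p(F_\mu^n)$ for every $j \in J_E$, so $J_E$ witnesses $E \sse F_\mu^{n+1}$. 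Hence $\omega \in E \sse \bigcap_{n \geq 1} F_\mu^n = E^{p,\mu}(F)$.

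The real work is Part 1, in particular exhibiting a single subset $J$ of size at least $\mu|I|$ such that $E^{p,\mu}(F) \sse B_j^p(E^{p,\mu}(F))$ for every $j \in J$. The key technical tool is that $B_j^p(G)$ is $\Pi_j$-measurable (it is a union of $\Pi_j$-cells), so for any $\Pi_j$-measurable $E$ with $E \sse B_j^p(G)$ we automatically get $E \sse B_j^p(G \cap E)$: conditioning on a $\Pi_j$-cell entirely inside $E$ does not distinguish $G$ from $G \cap E$. I would apply this observation to the events $\bigcap_{n \geq 1} B_j^p(F_\mu^{n-1})$, which are $\Pi_j$-measurable intersections, to promote termwise beliefs in each $F_\mu^{n-1}$ into a belief in their intersection with $E^{p,\mu}(F)$. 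Finiteness of $I$ is then used so that only finitely many subsets of size $\geq \mu|I|$ exist, allowing a pigeonhole choice of a uniform witnessing $J$ from the witnesses produced level by level; the second clause (existence of a $\mu$-fraction believing $F$) follows immediately from $E^{p,\mu}(F) \sse F_\mu^1$, and the converse of Part 2 follows by taking $E = E^{p,\mu}(F)$ in Definition~\ref{def:commonpm}.

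The hard part, and the place where the argument truly departs from M\&S, is pinning down the \emph{uniform} $J$ in Part~1: different states $\omega$ might a priori be witnessed by different subsets of agents at different levels $n$, and we need a single $J$ that works simultaneously for all $\omega \in E^{p,\mu}(F)$ and all $n$. The right combination of finiteness, pigeonhole, and $\Pi_j$-measurability of the relevant belief events---used to extend a witness that works at infinitely many levels to one that works at all levels---is where I would expect the proof to require the most care.
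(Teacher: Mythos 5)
Your treatment of Part 2 is essentially the paper's: the induction showing $E \sse F_{\mu}^n$ for every $n$ (base case from the common-belief witness, inductive step from the evident-belief witness plus monotonicity of $B_j^{p}$) is exactly the argument in Appendix~\ref{appendix:proof_prop2.4}, and deriving the converse of Part 2 from Part 1 by taking $E = E^{p,\mu}(F)$, as well as reading the second clause of Part 1 off of $E^{p,\mu}(F) \sse F_{\mu}^1$, is also how the paper proceeds. The divergence, and the gap, is in the first clause of Part 1. The paper's proof rests on three properties of the belief operator---idempotence $B_j^{p}(B_j^{p}(E)) = B_j^{p}(E)$, monotonicity, and the continuity property $B_j^{p}\left(\cap_n E^n\right) = \cap_n B_j^{p}(E^n)$ for \emph{decreasing} sequences---and its first, indispensable step is to show that $\left(F_{\mu}^n\right)_{n \geq 1}$ is decreasing. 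Your sketch never establishes (or even mentions) this monotonicity, yet your own plan depends on it twice: once to upgrade ``$\omega \in B_j^{p}(F_{\mu}^{n-1})$ for infinitely many $n$'' (from the pigeonhole) to ``for all $n$,'' and once to pass from termwise belief in each $F_{\mu}^{n-1}$ to belief in $\bigcap_n F_{\mu}^{n-1}$. Your proposed substitute---the promotion $E \sse B_j^{p}(G) \Rightarrow E \sse B_j^{p}(G \cap E)$ for $\Pi_j$-measurable $E$---is a correct fact but does not do this job: applied to $G_j = \bigcap_n B_j^{p}(F_{\mu}^{n-1})$ it yields $G_j \sse B_j^{p}(F_{\mu}^{n-1} \cap G_j)$ separately for each $n$, and no amount of intersecting these conclusions produces $G_j \sse B_j^{p}\left(\bigcap_n F_{\mu}^{n-1}\right)$; that step needs continuity from above of the conditional probability along a decreasing sequence, which is precisely the property the paper invokes after monotonicity is in hand.

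Second, you correctly identify the uniform-$J$ issue as the crux, but you leave it unresolved: your level-by-level pigeonhole produces a witnessing set that depends on the state $\omega \in E^{p,\mu}(F)$, whereas the definition of evident $(p,\mu)$-belief demands a single $J$ with $E^{p,\mu}(F) \sse B_j^{p}(E^{p,\mu}(F))$ for all $j \in J$ simultaneously. The paper handles this by extracting from the definition of $F_{\mu}^n$ one set $J$ with $F_{\mu}^n \sse B_j^{p}(F_{\mu}^{n-1})$ for all $j \in J$ and reusing that same $J$ at every level and every state; your state-by-state, level-by-level extraction does not obviously recombine into such a uniform witness, and you say as much. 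So while Part 2 is complete and matches the paper, the first clause of Part 1---the claim that actually makes $E^{p,\mu}(F)$ an evident $(p,\mu)$-belief---remains open in your proposal.
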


The proof of this proposition is essentially the same as the proof of the analogous proposition (Proposition 2) in Monderer's and Samet's paper \cite{Monderer1989}. However, the details of the proof are not particularly relevant or instructive with regard to our contributions in this work, so we omit them here and consign them to Appendix~\ref{appendix:proof_prop2.4}. The important takeaway from this proposition, is, as noted above, the formal equivalence between the definition of common $(p, \ \mu)$-belief in terms of evident $(p, \ \mu)$-belief (Definition \ref{def:commonpm}) and the hierarchical definition (Definition \ref{def:hierarchy}). As with the analogous definitions for common knowledge and common belief, the latter definition is more intuitive and perhaps more natural, but the former definition is more mathematically convenient and is what we will reference in what follows. The former definition is also a notion that better corresponds to how agents might be expected to reason about this type of belief in reality, since it is unrealistic to suppose that they consider infinite hierarchies of beliefs.

\section{Model}
\label{section:model}
Let $G = (V, E)$ be a graph and $I$ be a set of $n$ agents, such that each vertex $v_i \in V$ corresponds to an agent $i \in I$. We will think of $G$ as representing a social network of strategic agents who are participating in a revolt game. The graph $G$ is common knowledge among the agents.

Nature draws the \text{state} of the world $s \in S$ according to a distribution $D_S$ and selects a \textit{type} $t_i \in T = \set{\alpha, \nu} \cup X$ for each agent $i$, where $X = \cup_{j} \set{\chi_j}$ is non-empty (and finite). The types are selected independently at random according to a distribution $D_T^{s}$ associated with state $s$. $D_S$ and $D_T^{s}$ for each $s \in S$ are common knowledge among the agents.

Each agent $i$ will observe the type of each agent $k \in I$ such that $(v_i, v_k) \in E$ (this set of agents constitutes the set of \textit{neighbors} of $i$). The information resulting from this observation---an agent's type and the types of all of her neighbors---defines that agent's \textit{context}. When agent $i$ has the context $c$, we write it as $c(i) = c$. We use $C$ to denote the set of all contexts that are possible in $G$.

\textit{Ex ante}, or, before selection of the state, the assignment of types, and the observation of contexts, agents choose a pure \textit{strategy} $\sigma: C \to \set{R, Y}$, where $\set{R, Y}$ is the set of \textit{actions}.\footnote{``\underline{R}evolt'' and ``\underline{Y}ield,'' respectively.} A \textit{strategy profile} $(\sigma_1, \sigma_2, \ldots, \sigma_n)$ is collection of strategies for each agent. Let $\sigma_{-i} = (\sigma_1, \sigma_2, \ldots, \sigma_{i - 1}, \sigma_{i + 1}, \ldots \sigma_n)$ denote the strategies of all the agents except for $i$. Similarly, \textit{ex post}, or, after the selection of types and observation of contexts, an \textit{action profile} $(a_1, a_2, \ldots, a_n)$ where $a_i = \sigma_i(c(i))$ is a collection of actions for each agent.

Let $\mathcal{R}(a_1, a_2, \ldots, a_n) = \set{i \in [n] : a_i = R}$ denote the set of agents who play the action $R$ (revolt) given their context and strategy. Agent $i$ with type $t_i = t$ receives a payoff according to the function $f_i^t: \set{R, Y}^n \to [0, 1]$:

\begin{align*}
    & f_i^{\alpha}(a_1, a_2, \ldots, a_n) = 
    \begin{cases}
    1, & \text{if } a_i = R,\\
    0, & \text{otherwise},
    \end{cases}\\
    & f_i^{\nu}(a_1, a_2, \ldots, a_n) = 
    \begin{cases}
    1, & \text{if } a_i = Y,\\
    0, & \text{otherwise},
    \end{cases}\\
    &\text{and}\\
    & f_i^{\chi_j}(a_1, a_2, \ldots, a_n) = 
    \begin{cases}
    1 - p_j, & \text{if } |\mathcal{R}(a_1, a_2, \ldots, a_n)| \geq \mu_j \cdot n \text{ and } a_i = R,\\
    p_j, & \text{if } |\mathcal{R}(a_1, a_2, \ldots, a_n)| < \mu_j \cdot n \text{ and } a_i = Y,\\
    0, & \text{otherwise}.
    \end{cases}
\end{align*}
where $p_j, \mu_j \in [0, 1]$ for each $j$ and are common knowledge among the agents.

We call $\sigma_i$ a \textit{best-response} to $\sigma_{-i}$ when $\sigma_i$ maximizes agent $i$'s \textit{ex ante} expected payoff given $\sigma_{-i}$, which, by linearity of expectation, is equivalent to maximizing her expected payoff for each $c \in C$ (using the beliefs she would have about the contexts of the other agents after observing $c$). 

We say that a strategy profile $(\sigma_1, \sigma_2, \ldots, \sigma_n)$ is an \textit{equilibrium} when each strategy $\sigma_i$ is a best-response to $\sigma_{-i}$.  

Intuitively, the game is defined so that there is a natural strategy for each agent based on type:
\begin{itemize}
    \item Agents of type $\alpha$ should \underline{a}lways revolt. 
    
    \item Agents of type $\nu$ should \underline{n}ever revolt. 
    
    \item Agents of type $\chi_j$ should \underline{c}onditionally revolt: 
    \begin{itemize}
        \item Each type $\chi_j$ is characterized by a pair of thresholds $(p_j, \ \mu_j)$ that indicate an agent of type $\chi_j$ should revolt when she believes that $\Pr[|\mathcal{R}(a_1,a_2,\ldots,a_n)| \geq \mu_j] \geq p_j$. 
    \end{itemize}
\end{itemize}

An important thing to note about this game is that when there are agents of type $\nu$, there is not an equilibrium where each agent revolts regardless of her beliefs. Similarly, when there are agents of type $\alpha$, there is not an equilibrium where each agent chooses not to revolt regardless of her beliefs. 

Even without these, though, there are still potentially multiple equilibria corresponding to revolts of different sizes, and our work does not address equilibrium selection. Consequently, in what follows, we will write that a revolt (of a particular size) is \textit{supported} in equilibrium instead of writing that a revolt will occur. Similarly, we write that agents are \textit{secure enough to revolt} when they sufficiently believe their thresholds are met instead of writing that agents will revolt. 

Lastly, in this work we primarily consider the \textit{largest} revolts that are supported in some equilibrium. Such revolts are supported by \textit{symmetric} equilibria, in which each agent adopts the same strategy---namely, the strategy detailed in the bullet points above.

\subsection{Motivating Example}
To see this model in practice and get a feeling for how agents need to reason about the concepts that we have introduced, we will work through a modest example.

Suppose that $G$ is a grid of $n$ vertices embedded on a torus, so that each the vertex associated with each agent is adjacent to the vertices representing exactly four other agents (and there is no boundary). Thus, a context in this graph consists of the central agent and her type plus the types of each of her neighbors in $G$. 
In this example, we will have two types of agents: (1) Agents of type $\chi$, who want to revolt conditionally; they feel secure enough to revolt when the threshold pair $(p = \frac{2}{5}, \ \mu = \frac{1}{2})$ is satisfied. (2) Agents of type $\nu$, who never want to revolt. We will also have two equally-likely states: an anti-government state $A$, where agents are of type $\chi$ with probability $\frac{4}{5}$ and a pro-government state $B$ where agents are of type $\nu$ with probability $\frac{4}{5}$.

We also define the notion of a \textit{candidate agent}. In this setting, for reasons that will become clear in the proof of the following proposition, we refer to agents of type $\chi$ with two or more neighbors of type $\chi$ as \textit{candidate agents}.

\begin{proposition}
In this model, when $G$ is sufficiently large, the event that at least $\frac{1}{2}$ of the agents are candidate agents is an evident $(\frac{2}{5}, \ \frac{1}{2})$-belief. When it occurs, it is common $(\frac{2}{5}, \ \frac{1}{2})$-belief that supports a revolt of size $\frac{1}{2}$.
\end{proposition}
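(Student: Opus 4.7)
The plan is to take $J$ to be the set of candidate agents as the ``believers'' in Definition~\ref{def:evidentpm}. Since $E$ is precisely the event that at least half of the agents are candidates, the cardinality condition $|J| \geq \tfrac{1}{2}n$ is automatic whenever $E$ occurs, so the substance of the argument lies in showing that each candidate $j$ assigns probability at least $\tfrac{2}{5}$ to $E$ given her context, i.e.\ $E \sse B_j^{2/5}(E)$. I would reduce this to (a) a Bayesian update by the candidate on which state is realized, and (b) a concentration argument for the fraction of candidates conditional on each state.

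For (a), a candidate's context consists of her own type $\chi$ together with some number $k \in \set{2, 3, 4}$ of $\chi$-neighbors and $4-k$ $\nu$-neighbors. Types are independent given the state, so the likelihood ratio simplifies to
\[
\frac{\Pr[\text{context} \mid A]}{\Pr[\text{context} \mid B]} \;=\; \frac{(4/5)^{k+1}(1/5)^{4-k}}{(1/5)^{k+1}(4/5)^{4-k}} \;=\; 4^{2k-3} \;\geq\; 4,
\]
which together with the uniform prior on states yields $\Pr[A \mid \text{context}] \geq 4/5$ for every possible candidate context.

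For (b), given state $A$, each agent is independently of type $\chi$ with probability $4/5$ and hence independently a candidate with probability $q_A = (4/5)\cdot \Pr[\mathrm{Bin}(4, 4/5) \geq 2] > \tfrac{1}{2}$; symmetrically one computes $q_B < \tfrac{1}{2}$. A Chernoff-type concentration bound applied to the sum of candidacy indicators on the torus (which have bounded dependency degree, since each depends only on a $1$-hop neighborhood) shows that the fraction of candidates is within $o(1)$ of $q_s$ in state $s$ with probability $1 - o(1)$. The candidate's context fixes only the types in her own $1$-hop neighborhood, a set of bounded size, so conditioning on it perturbs the count by at most a constant. Hence $\Pr[E \mid A, \text{context}] = 1 - o(1)$ and $\Pr[E \mid B, \text{context}] = o(1)$, giving $\Pr[E \mid \text{context}] \geq (4/5)(1 - o(1)) \geq 2/5$ once $G$ is large enough; this establishes the evident-belief claim.

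Finally, for common $(2/5, 1/2)$-belief supporting a revolt of size $\tfrac{1}{2}$, I would use the symmetric natural strategy in which each $\chi$-agent revolts exactly when her threshold is satisfied, and let $F$ denote the event ``at least $n/2$ agents revolt.'' The analysis above shows that under this strategy every candidate revolts, so $E \sse F$. Invoking Definition~\ref{def:commonpm} with $E$ itself as the evident-belief event and $J$ as before, monotonicity of $B_j^p(\cdot)$ in its argument yields $E \sse B_j^{2/5}(E) \sse B_j^{2/5}(F)$ for each candidate $j$, completing the proof. The main obstacle is the conditional concentration step: one must argue that, after a candidate conditions on her entire context (not just on the count $k$), the global candidate count still concentrates around $n q_s$, which requires a Chernoff-type bound for Bernoulli variables with bounded dependency degree on the torus rather than plain Chernoff for independent sums.
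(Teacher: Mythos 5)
Your proposal is correct and follows the same overall decomposition as the paper: a Bayesian update showing every candidate context yields $\Pr[A \mid c] \geq \frac{4}{5}$, a concentration argument for the global candidate count, a product bound $\geq \frac{2}{5}$ establishing evident $(\frac{2}{5},\frac{1}{2})$-belief, and then the passage to common belief via Definition~\ref{def:commonpm}. The one genuine divergence is in the concentration step. The paper deliberately avoids Chernoff-type machinery here: it bounds the expected number of non-candidates ($\E[X] = \frac{693n}{3125}$, exactly your $1-q_A$) and applies plain Markov, obtaining only $\Pr[E \mid A] \geq \frac{1739}{3125} \approx 0.556$ --- a weak bound that just barely clears the target since $\frac{4}{5}\cdot\frac{1739}{3125} \geq \frac{2}{5}$. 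You instead invoke a Chernoff--Hoeffding bound for sums with bounded dependency degree (the torus neighborhoods give a constant-degree dependency graph, hence constant fractional chromatic number), which is precisely the tool the paper defers to its ``more rigorous analysis later'' (Theorem~\ref{theorem:dependent_bound}, used in Lemma~\ref{lemma:concentration-candidate}). Your route gives $1-o(1)$ concentration and therefore leaves much more slack, at the cost of needing the bounded-dependency observation; the paper's Markov route is more elementary and self-contained for the example. Two small points: your phrase ``independently a candidate with probability $q_A$'' is a slip --- the candidacy indicators are not independent --- but you immediately correct for this by using the dependency-graph bound, so nothing breaks. And your final step is slightly more explicit than the paper's: you identify $F$ as the revolt event and use $E \sse F$ together with monotonicity of $B_j^p$ to exhibit $F$ as common $(\frac{2}{5},\frac{1}{2})$-belief, whereas the paper simply notes that any evident-belief event is trivially common belief of itself and that a revolt of size $\frac{1}{2}$ is supported on it. Both are valid.
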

\begin{proof}
First, we will compute $\Pr[\text{State} = A | c]$ for each context $c$ and use the computed values to demonstrate that our definition of candidate agents is the correct one for this setting; candidates are likely to feel secure enough to revolt, based on their $p$-thresholds.

Each agent has 5 independent samples from the probability distribution defined by the state, with which to calculate the probability of each state, given her information. The first step in this is to compute the likelihood of her context given the state, which can be calculated by evaluating the probability mass function of a binomial distribution with the appropriate parameters. Then, the probability for each state given a certain neighborhood can be calculated using Bayes' rule. The results of these calculations are shown in Tables \ref{tab:binom} and \ref{tab:bayes}. 
    
Now, agents of type $\chi$ that only have at most one other neighbor of type $\chi$, believe that the state is $A$ with probability at most $\frac{1}{5}$, so they will not feel secure enough to revolt; a revolt of size $\mu = \frac{1}{2}$ is very unlikely to be supported when the state is $B$, and they do not sufficiently believe that the state is $A$. 
    
As a result, it is exactly agents of type $\chi$ with contexts such that they have two or more neighbors of type $\chi$, which we defined earlier as candidate agents, in which we are interested (cases with $k \geq 3$ in Table~\ref{tab:bayes}). In particular, we are interested in the probability that a majority of agents are candidates. 
\begin{claim*}
The probability that at least half the agents are candidates, given that the state is $A$ is at least $\frac{1739}{3125}$.
\end{claim*} 
\begin{claimproof}[Proof of Claim]
For this, we can try to count all the non-candidate agents and see how likely they are to outnumber the candidates. 

Let $X$ count the number of non-candidate agents. Specifically, let $X = \sum_{i \in [n]} X_i$, where $X_i$ is an indicator variable that is 1 if and only if agent $i$ is not a candidate. For each $X_i$, therefore, the expectation of $X_i$ is equal to the probability that $i$ is not a candidate, which can happen in three ways: (1) $i$ is of type $\nu$, (2) $i$ is of type $\chi$ and has no neighbors of type $\chi$, or (3) $i$ is of type $\chi$ and has one neighbor of type $\chi$. The probability of (1) is trivially $\frac{1}{5}$. Conditioned on $i$ being of type $\chi$, the respective probabilities of the remaining possibilities are (2) $\frac{1}{625}$ and (3) $\frac{16}{625}$. To obtain unconditional probabilities for (2) and (3), we multiply each conditional probability by $\frac{4}{5}$. Consequently, since the events (1), (2), and (3) are disjoint, we have the following:
    \[
        \E[X_i] = \frac{1}{5} + \frac{4}{5} \left( \frac{1}{625} \right) + \frac{4}{5} \left( \frac{16}{625} \right) = \frac{693}{3125}.
    \]
Therefore, by  linearity of expectation, $\E[X] = \frac{693n}{3125}$.

Now, we need to upper bound the probability that the number of non-candidate agents is the majority. Ideally for this, we would like to use something tight, like a typical Chernoff-Hoeffding bound. Unfortunately, here, the $X_i$'s are not independent; the contexts of agents that share neighbors are correlated, which complicates the calculation.

For our more rigorous analysis later, we will insist on tighter probability bounds. But for the sake of simplicity, in this example, Markov's inequality is sufficient:
    \[
        \Pr\left[X \geq \frac{n}{2}\right] \leq \frac{\frac{693n}{3125}}{\frac{n}{2}} = \frac{1386}{3125} \approx 0.44.
    \]
Consequently, the probability that at least half the agents are candidates, given that the state is $A$, is at least $(1 - \frac{1386}{3125}) = \frac{1739}{3125}$.
\end{claimproof}

Candidate agents can perform this exact same calculation and, further, they believe that the state is $A$ with probability at least $\frac{4}{5}$. Thus, when the graph is large enough that the context of any individual agent is inconsequential in their reasoning about the total fraction of candidate agents, the probability that they assign to at least half of the agents being candidate agents is at least $\frac{4}{5} \cdot (\frac{1739}{3125}) \geq \frac{2}{5}$.

That is, in sufficiently large graphs, candidate agents believe with probability at least $\frac{2}{5}$ that at least $\frac{1}{2}$ of the agents are candidate agents. As a result, the event that at least $\frac{1}{2}$ of the agents are candidate agents is an evident $(\frac{2}{5}, \ \frac{1}{2})$-belief. Consequently, when it occurs, by definition, it is common $(\frac{2}{5}, \ \frac{1}{2})$-belief. In this event, a revolt of size $\frac{1}{2}$ is supported, since at least half of the agents have their thresholds satisfied, and consequently feel secure enough to revolt.
\end{proof}

\begin{table}[t]
\caption{Results of the probabilistic calculations for the motivating example.}

\begin{subtable}[t]{0.49\textwidth}
  \centering
   \caption{The likelihood, in each state, of having a context with $k$ agents of type $\chi$.}
   {\renewcommand{\arraystretch}{1.25}
\begin{tabular}{ |c|c|c| } 
 \hline
  & $\Pr[ c | \text{State}  = A]$ & $\Pr[ c | \text{State} = B]$ \\
  \hline
 $k = 0$ & $\frac{1}{3125}$ & $\frac{1024}{3125}$ \\
 \hline
 $k = 1$ & $\frac{4}{625}$ & $\frac{256}{625}$ \\ 
 \hline
 $k = 2$ & $\frac{32}{625}$ & $\frac{128}{625}$ \\ 
 \hline
 $k = 3$ & $\frac{128}{625}$ & $\frac{32}{625}$ \\ 
 \hline
 $k = 4$ & $\frac{256}{625}$ & $\frac{4}{625}$ \\
 \hline
 $k = 5$ & $\frac{1024}{3125}$ & $\frac{1}{3125}$ \\
 \hline
\end{tabular}
}
\label{tab:binom}
\end{subtable}
\begin{subtable}[t]{0.49\textwidth}
  \centering
   \caption{The likelihood of state $A$, given a context with $x$ agents of type $\chi$.}
   {\renewcommand{\arraystretch}{1.25}
\begin{tabular}{ |c|c| } 
 \hline
  & $\Pr[\text{State} = A | k = x]$ \\
  \hline
 $x = 0$ & $\frac{1}{1025}$ \\
 \hline
 $x = 1$ & $\frac{1}{65}$  \\ 
 \hline
 $x = 2$ & $\frac{1}{5}$  \\ 
 \hline
 $x = 3$ & $\frac{4}{5}$ \\ 
 \hline
 $x = 4$ & $\frac{64}{65}$ \\
 \hline
 $x = 5$ & $\frac{1024}{1025}$ \\
 \hline
\end{tabular}
}
 \label{tab:bayes}
\end{subtable}

\label{tab:calculations}
\end{table}

\section{Applying Factional Belief: A Computational Perspective}
\label{section:applying}
The network revolt game model described in Section 3 gives us a useful setting in which to apply our definitions from Section 2 and demonstrate how they can provide insight into strategic coordination. In this section, we explore the interaction between factional belief and strategic coordination from a computational perspective, by trying to answer an elementary question: When can we efficiently determine if strategic coordination (i.e. in our model, a revolt) is supported under given conditions?

In pursuing an answer to this question, we seek to apply the intuition that we have constructed in our motivating example to a more general setting. Toward that end, a relatively modest generalization of the model used in the example---our Fundamental Case, below---is rich enough to provide an interesting, non-trivial answer to our question and to provide robust intuition that guides us through the various extensions of the model that we discuss in Section~\ref{section:broadening}.

\subsection{Fundamental Case: Low-Degree Graphs, Two States, and Three Types}
\label{subsection:fundamental-case}
Recall that our model requires a description of possible agent types and possible states (which specify probability distributions over those types). For our initial case, there are three agent types and two possible states.

Our focus will be on agents who want to revolt conditionally---agents of type $\chi$---who have the threshold pair $(p, \ \mu)$ for an arbitrary $0 \leq p \leq 1$ and $0 \leq \mu \leq 1$. In some sense, these are the truly ``strategic'' agents; they need to coordinate with other agents in order to feel secure enough to revolt. There are also agents who always behave in a prescribed manner, regardless of their contexts or the state: pro-government agents of type $\nu$, who will never revolt, and anti-government agents of type $\alpha$, who will always revolt.

The possible states are $A$, an anti-government state, and $B$, a pro-government state. $D_S$, the distribution over the states, $D_T^A$ and $D_T^B$, the distributions over the types in each state, and $p$ and $\mu$, the threshold values for agents of type $\chi$, are commonly known to the agents under the prior $P$. Intuitively, the labels assigned to the states as being anti- and pro-government correspond to an implication that as the number of agents grows, the size of the largest revolt supported in state $A$ should be larger than in state $B$. In what follows, we assume that the given labels are assigned correctly, but in practice the correct labels can be determined by running an algorithm that we present later with each possible labeling and comparing the results. When the states are correctly labeled, Algorithm~\ref{alg:one} will return $X_A$ and $X_B$ such that $X_A \geq X_B$. 

 In this more concrete setting, we are able to pose a straightforward question: Given values $\mu^*$ and $q^*$, is a revolt of size (at least) $\mu^*$ supported with probability at least $q^*$? We refer to this problem as \textsc{Revolt}.

\begin{definition}[{\textsc{Revolt}}] \hfill

\textbf{Given:} $(G, P, \mu^*, q^*)$, where $G$ is a graph with $n$ vertices and $P = (p, \mu, D_T^A, D_T^B, D_S)$ is the common prior. 

\textbf{Question:} Is a revolt of size at least $\mu^*$ supported with probability at least $q^*$? 
\end{definition}

Note that there is a nuance regarding the timing of the network revolt game model described in Section 3. \textsc{Revolt} considers the likelihood of revolt of a certain size being supported in a given network \textit{ex ante}---i.e. before the selection of the state and the assignment of types to agents. 

Although the question posed by \textsc{Revolt} is straightforward to state, it is not straightforward to solve efficiently. In fact, we have the following hardness result, which we prove in Appendix~\ref{appendix:hardness}:

\begin{proposition}
\label{prop:hardness}
\textsc{Revolt} is NP-hard.
\end{proposition}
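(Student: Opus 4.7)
The plan is to reduce from a known NP-hard problem, naturally \textsc{Clique}. The guiding intuition is Chwe's observation: when agents insist on absolute certainty before participating in a revolt, the participants must form a clique of the underlying graph, so that each can directly observe the thresholds of all the others. We will engineer the parameters of the \textsc{Revolt} instance so as to drive the common $(p,\mu)$-belief condition into this near-certainty regime, in which the existence of a revolt of size $\mu^*$ encodes the existence of a clique of a prescribed size.

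More concretely, given an instance $(H, k)$ of \textsc{Clique}, I would set $G$ to be $H$ together with a polynomial number of isolated padding vertices, chosen so that $\mu^* = k/n$ is an exact fraction. The distributions $D_S$ and $D_T^A$ are chosen so that the realized state is overwhelmingly likely to be the anti-government state $A$, and, conditional on $A$, each agent is overwhelmingly likely to be of type $\chi$; type $\alpha$ is excluded from the support so that no agent revolts ``for free.'' The threshold $p$ is taken to be very close to $1$ and $\mu$ is set to $k/n$. In this regime, any $\chi$-agent who revolts must be essentially certain which agents are revolting alongside her; those agents must therefore lie in her context, i.e., be her neighbors in $G$, and by symmetry the revolting set must induce a clique in $H$.

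Correctness then splits into the two standard directions. If $H$ contains a $k$-clique $C$, the symmetric strategy in which each agent revolts exactly when her context reveals the full clique $C$ (with every member of type $\chi$) yields a self-consistent best-response profile, so a revolt of size $\mu^*$ is supported whenever the all-$\chi$ context on $C$ is realized; we choose $q^*$ to be slightly below the probability of this event. Conversely, if $H$ has no $k$-clique, the structural argument above shows that no set of at least $k$ agents can simultaneously achieve common $(p, k/n)$-belief in their own revolt, so no equilibrium supports a revolt of size $\mu^*$ with probability $q^*$.

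The main obstacle will be tuning the probabilities carefully enough to make the ``near-certainty implies clique'' step fully rigorous while keeping the reduction polynomial. We need enough concentration on state $A$ and on $\chi$-types to force each revolting agent's posterior over the revolt outcome to reduce essentially to what she sees in her context, yet all probabilities must be representable with polynomially many bits. A related subtlety is ruling out spurious equilibria in which agents coordinate based on thin probabilistic reasoning that does not correspond to any clique; setting $q^*$ just below the clique-triggered probability should separate these cases, but verifying this separation and showing that no off-clique context can sustain a large enough factional belief in its own revolt will be the technical heart of the argument.
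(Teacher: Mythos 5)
Your overall strategy---reduce from \textsc{Clique} and use Chwe's observation that certainty-driven coordination forces the revolting set to be a clique---is exactly the approach the paper takes. However, there is a genuine gap in your parameter choices, and it sits precisely at the step you yourself identify as the technical heart. You take $p$ to be ``very close to $1$'' while simultaneously making the state overwhelmingly likely to be $A$ and, conditional on $A$, each agent overwhelmingly likely to be of type $\chi$. Under those choices, every $\chi$-agent---regardless of her context, and in particular without being in any clique---has a posterior arbitrarily close to $1$ (for large $n$, by concentration of the number of $\chi$-types) that at least $k$ agents are of type $\chi$. So the symmetric profile ``revolt whenever you are type $\chi$'' satisfies any fixed threshold $p<1$ once $n$ is large, and a revolt of size $\mu^*=k/n$ is supported with high probability whether or not $H$ contains a $k$-clique. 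This breaks the backward direction of your reduction: the ``near-certainty implies clique'' step is false for $p<1$ in the regime you set up, and making $p$ depend on $n$ to outpace the concentration would force you to prove a sharp upper bound on every achievable off-clique posterior, which you do not do.

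The paper avoids this entirely by setting $p=1$ exactly (which the definition of \textsc{Revolt} permits; only \textsc{Promise Revolt} requires $p$ strictly interior), using a balanced state prior $\Pr[A]=\Pr[B]=\tfrac12$ with $\Pr[\chi\mid B]=0$ and $\Pr[\chi\mid A]=0.99$, and taking $q^*=0.99^{k}/2$. With $p=1$, belief means certainty on the agent's information cell, so each revolting agent must directly observe (and commonly know, in Chwe's sense) the types of the other $k-1$ revolting agents, which forces a $k$-clique; no tuning of probabilities or ruling out of ``thin probabilistic reasoning'' is needed. If you replace ``$p$ very close to $1$'' with ``$p=1$'' and drop the requirement that state $A$ and type $\chi$ be overwhelmingly likely (a constant probability of the all-$\chi$ clique event suffices for the forward direction, with $q^*$ set just below it), your argument becomes the paper's.
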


Consequently, we will instead consider a relaxed version of the problem that we will be able to solve efficiently. In order to define this new problem, we first require $p$ and $\mu$ to be strictly between 0 and 1. We will also need two error terms, which will be constants given as input: $\epsilon, \delta > 0$. Lastly, we introduce two additional priors, as follows (recall $P = (p, \mu, D_T^A, D_T^B, D_S)$ is the common prior given as input to both problems):
\begin{align*}
    P^- & = (p - \delta, \mu - \epsilon, D_T^A, D_T^B, D_S),\\
    P^+ & = (p + \delta, \mu + \epsilon, D_T^A, D_T^B, D_S).
\end{align*}

Now, we are ready to define our new problem, \textsc{Promise Revolt}.

\begin{definition}[{\textsc{Promise Revolt}}] \hfill

\textbf{Given:} $(G, P, \mu^*, \epsilon, \delta)$, where $G$ is a graph with $n$ vertices, $P = (p, \mu, D_T^A, D_T^B, D_S)$ is the common prior, and $\epsilon, \delta > 0$ are constants.

\textbf{Output:} When exactly one of the following cases is true, output the corresponding symbol:
\begin{itemize}
    \item[$\mathbf{\Omega}$:] A revolt of size $\mu^* - \epsilon$ is supported with probability $q^* \geq 1 - \delta$ under the prior $P^-$.
    
    \item[$\mathbf{A}$:] A revolt of size $\mu^*$ is supported with probability $q^* \in [\Pr[\text{State} = A] - \delta, \Pr[\text{State} = A] + \delta]$ under the prior $P$.
    
    \item[$\bm{\emptyset}$:] A revolt of size $\mu^* + \epsilon$ is supported with probability $q^* \leq \delta$ under the prior $P^+$.
\end{itemize}

\end{definition}

\textsc{Promise Revolt} is named so as to emphasize that this it is a promise problem, in the typical sense. The ``promise'' is that the given instance is such that exactly one of the cases from the definition of the problem is true. Given that promise, the three cases are mutually exclusive and any solution is required to always output the correct answer. It is exactly this promise to exclude difficult inputs that makes \textsc{Promise Revolt} easier to solve than \textsc{Revolt}.

Still, in providing an algorithm to solve \textsc{Promise Revolt}, we will see that the problem has a structure that allows us to closely approximate \textsc{Revolt} by solving \textsc{Promise Revolt} (for large graphs). This structure is discussed in more detail at the end of this subsection (4.1), particularly with respect to Figure~\ref{fig:equilibria}, which helps to illustrate this intuition.
 
 Lastly, note that we assume all inputs to \textsc{Revolt} and \textsc{Promise Revolt} are given as rational numbers. This leads us to state our first theorem:

\begin{theorem} 
\label{theorem:alg}
Given $\epsilon > 0$, the prior $P$, and $\mu^* \in [0, 1]$, there exists $\delta(n) \in \frac{1}{\exp \left( \Omega_{\epsilon, P} \left(\sqrt[3]{n} \right) \right)}$ such that for any graph $G$ with $n$ vertices where the largest degree of any vertex is $O_{\epsilon, P} \left( \sqrt[3]{n} \right)$, Algorithm~\ref{alg:three} can be used to solve \textsc{Promise Revolt}$(G, P, \mu^*, \epsilon, \delta = \delta(n))$ in polynomial time.
\end{theorem}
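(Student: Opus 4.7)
The plan is to establish an algorithmic fixed-point characterization of the contexts that correspond to ``secure enough to revolt'' behavior, combine it with a concentration argument that exploits the bounded-degree assumption, and then read off the three promise cases by comparing the resulting fractions to $\mu^*$. I expect Algorithm~\ref{alg:three} to maintain a candidate set $\mathcal{C}$ of context types and iterate: starting with all contexts whose central agent is of type $\chi$ or $\alpha$, it repeatedly removes any $c \in \mathcal{C}$ such that an agent observing $c$, assuming the symmetric strategy ``revolt iff context in $\mathcal{C}$,'' would compute posterior belief strictly less than $p$ that the total revolt fraction meets her threshold. Because a context is determined by the degree of the central agent and the multiset of neighbor types, the number of contexts is polynomial in $n$ under the $O(n^{1/3})$ degree bound, and the monotone shrinking of $\mathcal{C}$ guarantees termination in polynomial time; this is also the step that reveals that only the degree sequence of $G$, not its finer structure, is needed.

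The main probabilistic ingredient is to justify the step ``would compute posterior belief $\geq p$.'' Fix a candidate set $\mathcal{C}$ and a state $s$, and let $X_i$ be the indicator that agent $i$'s context lies in $\mathcal{C}$. Conditional on $s$, types are i.i.d., so $X_i$ depends only on the types in $i$'s closed neighborhood; changing one type thus flips at most $d+1$ indicators, where $d = O_{\epsilon,P}(n^{1/3})$ is the maximum degree. McDiarmid's bounded-differences inequality then gives, conditional on $s$,
\[
    \Pr\!\left[\,\left|\tfrac{1}{n}\textstyle\sum_i X_i - \mu_s^{*}\right| \geq \tfrac{\epsilon}{2}\,\right] \;\leq\; 2\exp\!\left(-\Omega\!\left(\tfrac{\epsilon^{2} n}{d^{2}}\right)\right) \;=\; 2\exp\!\left(-\Omega_{\epsilon,P}(n^{1/3})\right),
\]
where $\mu_s^{*} = \Pr_{\text{type}}[\text{context} \in \mathcal{C}\mid s]$ is computable directly from the prior and the degree sequence. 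This is exactly the rate required for $\delta(n)$, and it simultaneously yields, for each individual candidate agent with context $c$, the approximation $\Pr[|\mathcal{R}| \geq \mu n \mid s, c(i)=c] = \mathbf{1}[\mu_s^{*} \geq \mu] \pm \exp(-\Omega_{\epsilon,P}(n^{1/3}))$, since her own context only fixes $O(d)$ of the $n$ types.

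To conclude, I would run the iteration under each of the priors $P^{-}, P, P^{+}$ to compute fixed-point fractions $\mu_A^{*}, \mu_B^{*}$ in each state, then distinguish the three promise cases: case $\mathbf{\Omega}$ arises when the $P^{-}$-fixed-point fractions in both states exceed $\mu^{*}-\epsilon$ by a margin absorbable by $\delta(n)$; case $\bm{\emptyset}$ when the $P^{+}$-fixed-point fractions fall below $\mu^{*}+\epsilon$ by the analogous margin; case $\mathbf{A}$ when the $P$-fixed-point fraction is large under $A$ and small under $B$, so the revolt is supported exactly when the state is $A$. The main obstacle I anticipate is the second-order bookkeeping: an agent's posterior $\Pr[s\mid c]$ depends on her observed context, so composing it with the indicator-approximation above interacts nontrivially with the $p$-threshold, and the $\epsilon,\delta$ slack in the perturbed priors must be tuned to be large enough to absorb concentration error while being small enough not to collapse the gap between the three regimes. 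Showing that the iteration's fixed point really does correspond to the \emph{largest} symmetric equilibrium revolt, as promised by the definitions in Section~\ref{section:model}, requires carefully tracking monotonicity through this second-order reasoning, and that is where I expect the bulk of the technical effort to go.
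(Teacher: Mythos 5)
Your proposal is correct in its overall architecture and matches the paper's proof strategy: enumerate contexts (polynomially many, determined by the degree sequence alone), compute expected revolting fractions per state from the prior, prove concentration at rate $\exp\left(-\Omega_{\epsilon,P}\left(\sqrt[3]{n}\right)\right)$ using the degree bound, and run the computation under perturbed priors so that the promise guarantees both perturbed runs agree. Two of your choices differ from the paper in ways worth noting. First, your concentration tool is McDiarmid's bounded-differences inequality (each type flips at most $d+1$ context-indicators), whereas the paper applies a Chernoff--Hoeffding bound for locally dependent variables via the fractional chromatic number of the dependency graph, which has maximum degree $O(d^2) = O(n^{2/3})$; both yield the same $\exp(-\Omega(\epsilon^2 n/d^2))$ rate, so this is a cosmetic substitution. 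Second, your iterative greatest-fixed-point removal of contexts is more general than the paper's Algorithm~\ref{alg:one}, which for the two-state fundamental case is a direct three-way case analysis (no candidate states / both candidates / only $A$ a candidate, followed by a single candidate-context test); the paper only introduces your style of iterative removal in the many-states extension of Section~\ref{section:broadening}. Your version subsumes theirs and correctly computes the largest symmetric equilibrium by the same monotonicity argument the paper gives there, at the cost of obscuring the clean structural picture (Figure~\ref{fig:equilibria}) that the explicit case analysis provides.

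The one place you defer rather than resolve --- the ``second-order bookkeeping'' of how an agent's posterior over the state, computed from her own context, composes with her estimate of the revolt fraction --- is handled in the paper more directly than you anticipate, and you already have the needed ingredient. The paper folds the posterior $\Pr[\text{State}=A \mid c] \geq p$ directly into the definition of a candidate context (so it is an exact Bayes computation over polynomially many contexts, not something to be approximated), and then observes that, \emph{after conditioning on the state}, the agent's knowledge of the $O(\sqrt[3]{n})$ realized types in her own context perturbs her estimate of the global revolt fraction negligibly, so this effect is absorbed into the $\epsilon$ and $\delta(n)$ slack --- exactly the $O(d)$-of-$n$ observation you make in passing. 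The remaining content of the paper's proof that your sketch compresses is the counterfactual argument that the promise excludes any instance where $p$ or $\mu$ lies within $\delta$ or $\epsilon$ of a crucial decision threshold of the algorithm, which is what guarantees the two perturbed runs return the same answer; your remark about tuning the slack gestures at this but would need to be made explicit to close the argument.
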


\begin{algorithm}
\SetAlgoNoLine
\KwIn{A graph $G = (V, E)$ of $n$ vertices, the prior $P = (p, \mu, D_T^A, D_T^B, D_S)$.}
\KwOut{$X_A$ and $X_B$, the maximum size of revolt supported in states $A$ and $B$, respectively.}
\vspace{1 ex}
Let $e_s(\tau)$ denote the expected fraction of type $\tau$ agents in state $s$.\\
Define the set of candidate states $S_C = \set{s \in \set{A, B} : e_s(\chi \cup \alpha) \geq \mu}$.\\
\If{$S_C = \emptyset$}{
    $X_s = e_s(\alpha) \  \forall s \in \set{A, B}$.
}
\If{$S_C = \set{A, B}$}{
    $X_s = e_s(\chi \cup \alpha) \  \forall s \in \set{A, B}$.
}
\If{$S_C = \set{A}$}{
    Let $C(\chi)$ be the set of contexts centered around agents of type $\chi$.\\
    Define the set of candidate contexts $C_C = \set{c \in C(\chi) : \Pr[\text{State} = A | c] \geq p}$.\\
    (Below, we slightly abuse notation, treating $C_C$ as if it were a type when writing $e_s(C_C)$.)\\
    \If{$e_A(C_C \cup \alpha) \geq \mu$}{
        $X_s = e_s(C_C \cup \alpha) \  \forall s \in \set{A, B}$.
    }
    \Else{
        $X_s = e_s(\alpha) \  \forall s \in \set{A, B}$.
    }
}
\Return $X_A$, $X_B$.

    \caption{Finding the size of largest revolt supported in each state.}
	\label{alg:one}
\end{algorithm}

\begin{algorithm}
\SetAlgoNoLine
\KwIn{$X_A$, $X_B$ (output from Algorithm~\ref{alg:one}) and $\mu^*$.}
\KwOut{$\mathbf{\Omega}$, $\mathbf{A}$, or $\bm{\emptyset}$.}
\vspace{1 ex}
\If{$X_A \geq \mu^*$ and $X_B \geq \mu^*$}{
    \Return $\mathbf{\Omega}$
}
\If{$X_A \geq \mu^*$ and $X_B < \mu^*$}{
    \Return $\mathbf{A}$
}
\If{$X_A < \mu^*$ and $X_B < \mu^*$}{
    \Return $\bm{\emptyset}$
}

\caption{Comparing the size of the largest revolt supported in each state.}
\label{alg:two}
\end{algorithm}

\begin{algorithm}
\SetAlgoNoLine
\KwIn{$(G, P, \mu^*, \epsilon, \delta)$, where $G$ is a graph of $n$ vertices, $P = (p, \mu, D_T^A, D_T^B, D_S)$ is the common prior, and $\epsilon, \delta > 0$ are constants}
\KwOut{$\mathbf{\Omega}$, $\mathbf{A}$, $\bm{\emptyset}$, or \textbf{\textsc{Null}}.}
\vspace{1 ex}
$X_{A_1}, X_{B_1} \gets \textbf{ALGORITHM~\ref{alg:one}}(G, P = (p', \mu', D_T^A. D_T^B, D_S))$, where $p' = p + \frac{\delta}{3}$, $\mu' = \mu + \frac{\epsilon}{3}$.\\
$S_1 \gets  \textbf{ALGORITHM~\ref{alg:two}}(X_{A_1}, X_{B_1}, \mu^*)$.

\vspace{1 em}

$X_{A_2}, X_{B_2} \gets \textbf{ALGORITHM~\ref{alg:one}}(G, P = (p', \mu', D_T^A. D_T^B, D_S))$, where $p' = p - \frac{\delta}{3}$, $\mu' = \mu - \frac{\epsilon}{3}$.\\
$S_2 \gets  \textbf{ALGORITHM~\ref{alg:two}}(X_{A_2}, X_{B_2}, \mu^*)$.

\If{$S_1 = S_2$}{
    \Return $S_1$
}
\Else{
    \Return \textbf{\textsc{Null}}
}

\caption{Solving \textsc{Promise Revolt}}
\label{alg:three}
\end{algorithm}

Before we get to the proof of Theorem~\ref{theorem:alg}, we briefly discuss our assumption that each agent has a degree that is $O\left(\sqrt[3]{n}\right)$. Primarily, this assumption serves to simplify our analysis. For simplicity, it is convenient to make some distinction between ``low-degree'' agents and ``high-degree'' agents to highlight the fact that a prototypical high-degree agent would have a lot more information about the state than a prototypical low-degree agent. However, any particular choice of cutoff to separate high- and low-degree agents is somewhat arbitrary. Here, we choose $O\left(\sqrt[3]{n}\right)$, which is mathematically convenient for defining an upper bound on our error function $\delta(n)$ in Theorem~\ref{theorem:alg}. 

Given this cutoff, we focus first on the case where there are only low-degree agents, which is sufficient to provide the guiding intuition that we will follow in the next section, when we discuss broadening the setting in various ways. One such extension will involve allowing vertices of arbitrary degree.

We now proceed by making several arguments which form the building blocks of the proof of Theorem~\ref{theorem:alg} that follows.

\begin{lemma}
\label{lemma:poly}
Algorithms~\ref{alg:one} and \ref{alg:two} terminate in polynomial time with respect to the number of agents.  
\end{lemma}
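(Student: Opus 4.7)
My plan is to show that every primitive operation in Algorithms \ref{alg:one} and \ref{alg:two} runs in time polynomial in $n$, and that only $O(1)$ such operations are executed, so the total running time is $\mathrm{poly}(n)$.

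Algorithm~\ref{alg:two} is immediate: it performs at most three comparisons of the pre-computed scalars $X_A$ and $X_B$ against $\mu^*$ and outputs one of three symbols, which is $O(1)$ given its inputs. So essentially all the work lies in Algorithm~\ref{alg:one}. The only nontrivial quantities there are the expected fractions $e_s(\tau)$. For $\tau$ a union of the base types $\alpha,\nu,\chi$, this is immediate from the priors: because types are drawn i.i.d.\ conditional on the state, $e_s(\tau) = D_T^s(\tau)$, which can be read off in $O(1)$ time. Consequently the $S_C = \emptyset$ and $S_C = \{A,B\}$ branches, as well as the construction of $S_C$ itself, run in constant time.

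The substantive work appears only in the $S_C = \{A\}$ branch, which asks for $e_s(C_C)$ where $C_C$ is the set of type-$\chi$ contexts $c$ with $\Pr[\text{State}=A\mid c]\ge p$. A naive enumeration of contexts at a vertex of degree $d$ treats the neighbors as an ordered tuple, giving $|T|^{d+1}$ candidates---exponential in $d$. The crucial step I would use to avoid this is the observation that both $\Pr[c\mid s]$ and hence $\Pr[\text{State}=A\mid c]$ depend on $c$ only through the center's type and the \emph{multiset} of neighbor types, since neighbor types are i.i.d.\ given the state. The number of such multisets at a degree-$d$ vertex is $\binom{d+|T|-1}{|T|-1} = O(d^{|T|-1})$; in the Fundamental Case $|T|=3$ is constant, so this is $O(d^2)$, polynomial in $n$. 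For each multiset one evaluates the multinomial PMF to obtain $\Pr[c\mid A]$ and $\Pr[c\mid B]$ and then applies Bayes to get $\Pr[\text{State}=A\mid c]$, both in $\mathrm{poly}(d)$ time. Summing over the resulting candidate multisets yields $\Pr[v\in C_C \mid s]$ for each vertex $v$, and averaging these over $v$ gives $e_s(C_C)$ (it even suffices to group vertices by degree, since vertices of the same degree have identical context distributions). The total work is thus bounded by $\sum_v \mathrm{poly}(d_v) \le n \cdot \mathrm{poly}(n) = \mathrm{poly}(n)$.

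The only conceptual obstacle I expect is identifying the multiset collapse: without it, the enumeration is exponential in the maximum degree. With that reduction in hand, every remaining step is a routine multinomial/Bayes evaluation, and both algorithms terminate in time polynomial in $n$.
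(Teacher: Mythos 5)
Your proposal is correct and follows essentially the same route as the paper: the ``multiset collapse'' you identify is exactly the paper's observation that contexts are identity-agnostic, so with a constant number of types the number of distinct contexts is polynomial in $n$, each relevant probability is a Bayes'-rule computation over binomial/multinomial likelihoods, and Algorithm~\ref{alg:two} is trivially constant-time. Your write-up just makes the context count ($O(d^{|T|-1})$ per degree-$d$ vertex) and the grouping by degree explicit, which the paper leaves implicit.
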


\begin{proof}
The key for this property of Algorithm~\ref{alg:one} is that contexts are identity-agnostic, so the number of contexts is polynomial in $n$ when the number of types is a constant. Therefore, we are able to enumerate all of the possible contexts in polynomial time. For each of these contexts $c$, we can compute the relevant probabilities---$\Pr[\text{State} = s | c]$ and $\Pr[c | \text{State} = s]$ for both states $s$---using Bayes' rule. 
The rest of the steps in the algorithm are linear in the number of contexts, and therefore also computable in polynomial time. Algorithm~\ref{alg:two} is trivially computable in constant-time.
\end{proof}

Having shown that they are polynomial-time computable, we now demonstrate the correctness of Algorithms~\ref{alg:one} and \ref{alg:two} under idealized conditions.

\begin{claim*}
When agents \underline{believe} with probability $1$ that the actual size of the largest supported revolt in each state will exactly equal its expected size, then Algorithm~\ref{alg:one} correctly computes the expected size of the largest revolt in each state. When, further, it is \underline{true} that the actual size of the largest supported revolt in each state will exactly equal its expected size, then Algorithm~\ref{alg:two} identifies the set of states in which revolt of size $\mu^*$ is supported with no error.
\end{claim*}

\begin{claimproof}[Proof of Claim]
Algorithm~\ref{alg:one} first defines the set of candidate states---these are the states in which it is possible, but not necessarily the case, that type-$\chi$ agents will feel secure enough to revolt, because there are at least $\mu$ $\alpha$- and $\chi$-type agents. If there are no such states, then only agents of type $\alpha$ will revolt. If both states are candidates, then all $\alpha$- and $\chi$-type agents will feel secure enough to revolt: Agents of type $\alpha$ always revolt and agents of type-$\chi$ have their $\mu$ threshold met in both states, by the definition of a candidate state. As a result, their $p$ threshold is also necessarily met, because the probability of the state being either $A$ or $B$ is $1 \geq p$. The most interesting case is when only $A$ is a candidate state. In this case, only type $\chi$ agents who $p$-believe that the state is $A$ will have their $p$-threshold met. So, similarly to our example from Section 3.1, we call those agents candidate agents (and refer to their contexts as candidate contexts.) If the number of candidate agents and $\alpha$-type agents, given that the state is $A$, is at least $\mu$, then all of those agents feel secure enough to revolt. This is true in any state, because even when the state is $B$, candidate agents, by definition, $p$-believe that the state is $A$. 

Algorithm~\ref{alg:two} simply compares the size of the expected revolt in each state to $\mu^*$ to decide in which states, if any, a revolt of expected size at least $\mu^*$ is supported. If the actual size of the revolt in any state is exactly equal to its expectation, as we assume, then Algorithm~\ref{alg:two} introduces no error.
\end{claimproof}

Now, the assumption that the actual supported revolt exactly equals the expected size of the supported revolt is, of course, too strict. However, we will be able to show that the actual size of the supported revolt concentrates around its expectation, and consequently, the errors in our algorithms decrease quickly as $n$ grows.

Recall that computing the expected size of a supported revolt involves counting the number of some subset of three kinds of agents: $\alpha$-type agents, $\chi$-type agents, and candidate agents. For $\alpha$-type and $\chi$-type agents, it is simple to show that the number of such agents concentrates around its expectation because agent types are assigned independently at random.

\begin{lemma}
\label{lemma:concentration-alpha-chi}
Given $\epsilon > 0$, the probability that, in a given state, the number of agents of type $\chi$ and of type $\alpha$ differ from the expected number of agents of type $\chi$ and of type $\alpha$ by a multiplicative factor of $\epsilon$ is at most $\delta =\delta(n)$ for some $\delta(n) \in \frac{1}{\exp \left( \Omega_{\epsilon, P} \left(\sqrt[3]{n} \right) \right)}$.
\end{lemma}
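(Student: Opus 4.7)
The plan is to observe that, conditional on the state $s$, the types of the $n$ agents are drawn independently and identically from $D_T^s$, so the counts of $\alpha$-type and $\chi$-type agents are simply sums of independent Bernoulli random variables. This reduces the lemma to a direct application of a standard multiplicative Chernoff bound, followed by a two-term union bound.

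More concretely, first I would fix a state $s \in \{A, B\}$ and let $N_\alpha$ and $N_\chi$ denote the random number of agents of type $\alpha$ and $\chi$, respectively, in that state. Writing $N_\tau = \sum_{i=1}^n X_i^\tau$ where $X_i^\tau$ is the indicator that agent $i$ has type $\tau$, the $X_i^\tau$ are i.i.d.\ Bernoulli with parameter $D_T^s(\tau)$, so $\E[N_\tau] = n \cdot D_T^s(\tau)$. Since $P$ is part of the (fixed) input, $D_T^s(\tau)$ is a positive constant depending only on $P$, and hence $\E[N_\tau] = \Theta_P(n)$. (If either $D_T^s(\alpha)$ or $D_T^s(\chi)$ were zero, the corresponding count is deterministically zero and the statement is trivial.)

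Next I would apply the multiplicative Chernoff bound to each count, yielding
\[
\Pr\bigl[\,|N_\tau - \E[N_\tau]| \geq \epsilon \E[N_\tau]\,\bigr] \;\leq\; 2 \exp\!\left(-\tfrac{\epsilon^2 \E[N_\tau]}{3}\right) \;=\; 2 \exp\!\bigl(-\Omega_{\epsilon, P}(n)\bigr).
\]
A union bound over the two types $\tau \in \{\alpha, \chi\}$ gives a combined failure probability in $\exp(-\Omega_{\epsilon, P}(n))$, which is subsumed by $\exp(-\Omega_{\epsilon, P}(\sqrt[3]{n}))$ and hence yields the required $\delta(n)$.

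There is no real obstacle here: the independence of agents' types given the state makes this a textbook Chernoff argument. I expect the paper to be choosing the weaker $\sqrt[3]{n}$ rate only for uniformity with the companion concentration result for candidate agents, where correlations between agents that share a neighbor prevent straightforward independence and force a coarser bound; the $\alpha$/$\chi$ counts themselves would in fact concentrate at the much faster rate $\exp(-\Omega(n))$.
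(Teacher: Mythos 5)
Your proposal is correct and matches the paper's own (much terser) argument: conditional on the state, types are independent, so a standard Chernoff--Hoeffding bound gives concentration at rate $\exp(-\Omega_{\epsilon,P}(n))$, which is trivially subsumed by the stated $\exp(-\Omega_{\epsilon,P}(\sqrt[3]{n}))$ rate. Your closing observation about why the weaker cube-root rate is used (uniformity with the candidate-agent lemma, where dependencies force a coarser bound) is exactly the paper's intent.
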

\begin{proof}
Because the agent types are drawn independently at random, we can use a standard Chernoff-Hoeffding bound on the expected number of each type of agents. Consequently, the difference between the actual and expected number of each type of agent decays exponentially with $n$, and therefore we can choose $\delta(n) \in \frac{1}{\exp \left( \Omega_{\epsilon, P} \left(\sqrt[3]{n} \right) \right)}$ so that the difference is trivially smaller. 
\end{proof}

Counting the number of candidate agents, as foreshadowed in the Motivating Example of Section 3.1, is somewhat more complicated. Here, we need to consider the contexts of agents---not just their type---and agents' contexts are correlated with the contexts of their neighbors and their neighbors' neighbors in $G$. We will show, though, that as the graph grows, the effect of this correlation is small. In fact, we will still be able to prove an exponentially-decreasing bound on the difference between the actual and expected number of candidate agents.
\begin{lemma}
\label{lemma:concentration-candidate}
Given $\epsilon > 0$, the probability that, in a given state, the number of candidate agents is less than the expected number of candidate agents by a multiplicative factor of $\epsilon$ is at most $\delta =\delta(n)$ for some $\delta(n) \in \frac{1}{\exp \left( \Omega_{\epsilon, P} \left(\sqrt[3]{n} \right) \right)}$.
\end{lemma}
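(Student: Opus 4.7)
\textbf{Proof plan for Lemma~\ref{lemma:concentration-candidate}.} The plan is to view the candidate count as a function of the independent type draws and then apply a bounded-differences inequality (McDiarmid). Fix the state $s \in \{A, B\}$ and condition on it throughout. For each agent $i$, let $Y_i$ be the indicator that $i$ is a candidate, i.e.\ $t_i = \chi$ and the context $c(i)$ (which depends only on $i$'s type and her neighbors' types) satisfies $\Pr[\text{State} = A \mid c(i)] \geq p$. Let $Y = \sum_{i \in [n]} Y_i$. By linearity of expectation and the fact that contexts are identity-agnostic, $\mathbb{E}[Y \mid \text{State} = s] = n \cdot e_s(C_C)$, which is exactly the quantity computed by Algorithm~\ref{alg:one}. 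So it suffices to show that $Y$ concentrates around this mean.

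The key observation is that $Y$ is a function of the independent type draws $t_1, \ldots, t_n \sim D_T^s$, and each coordinate $t_k$ influences only a bounded number of the $Y_i$'s. Specifically, $Y_i$ depends solely on the types in the closed neighborhood $\{i\} \cup N(i)$, so the only indicators affected by flipping $t_k$ are those $Y_i$ with $k \in \{i\} \cup N(i)$, a set of cardinality at most $1 + \deg(k)$. Since $G$ has maximum degree $d = O_{\epsilon, P}(n^{1/3})$, changing any single type coordinate changes $Y$ by at most $c_k \leq 1 + d = O_{\epsilon, P}(n^{1/3})$.

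With the bounded-differences constants in hand, McDiarmid's inequality yields
\[
    \Pr\bigl[\,Y \leq \mathbb{E}[Y] - \epsilon \mathbb{E}[Y]\,\bigr] \;\leq\; \exp\!\left(-\frac{2 \epsilon^2 \mathbb{E}[Y]^2}{\sum_{k} c_k^2}\right) \;\leq\; \exp\!\left(-\frac{2 \epsilon^2 \mathbb{E}[Y]^2}{n(1+d)^2}\right).
\]
Since $\mathbb{E}[Y] = \Theta_{P}(n)$ (the fraction of candidate agents is a constant determined by $P$ once the state is fixed and nontrivial), the exponent is $-\Omega_{\epsilon, P}(n / d^2) = -\Omega_{\epsilon, P}(n^{1/3})$, which gives the desired $\delta(n) \in 1/\exp(\Omega_{\epsilon, P}(n^{1/3}))$. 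This is precisely where the degree cutoff $d = O(n^{1/3})$ is used: it is the largest $d$ for which $n / d^2 \geq n^{1/3}$, matching the $\delta(n)$ bound in Lemma~\ref{lemma:concentration-alpha-chi} so that the two concentration results compose cleanly.

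\textbf{Main obstacle.} The only real subtlety is the bounded-differences step: one must verify that being a candidate depends on the agent's own type \emph{and} the types of the immediate neighbors but on nothing further, so that the influence radius of a single $t_k$ is indeed just $\{k\} \cup N(k)$. This is immediate from the definition of context. A secondary, minor issue is that the bound is trivial when $\mathbb{E}[Y]$ is $o(n)$; in that regime, however, Algorithm~\ref{alg:one} would output a count driven by $e_s(\alpha)$ rather than by $e_s(C_C \cup \alpha)$, so the concentration we need for the candidate term is vacuous and the result follows from Lemma~\ref{lemma:concentration-alpha-chi} alone.
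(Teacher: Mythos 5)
Your proof is correct, but it takes a genuinely different route from the paper. The paper keeps the candidate indicators $X_{i,C}$ as its basic random variables, observes that $X_{i,C}$ and $X_{j,C}$ are dependent only when $i$ and $j$ are within two hops in $G$, and applies the Chernoff--Hoeffding bound for locally dependent sums (Theorem~\ref{theorem:dependent_bound}, from Dubhashi and Panconesi) with the fractional chromatic number of the dependency graph bounded by its maximum degree plus one, which is $O_{\epsilon,P}(n^{2/3})$; this gives $\Pr[|X_C - \E[X_C]| \geq \epsilon n] \leq 2\exp(-2\epsilon^2 n/\chi^*(\Gamma)) \in 1/\exp(\Omega_{\epsilon,P}(\sqrt[3]{n}))$. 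You instead descend to the underlying independent type draws and use McDiarmid's bounded-differences inequality, with influence constants $c_k \leq 1 + \deg(k)$. Both arguments are sound and both exploit the degree cap in the same essential way (two hops for the paper, one hop squared via $\sum_k c_k^2$ for you), and both land on the same $\exp(-\Omega_{\epsilon,P}(n^{1/3}))$ rate. Your version is somewhat more elementary in that it avoids the fractional-chromatic-number machinery, and because it aggregates $\sum_k c_k^2$ rather than taking a worst-case bound on the dependency-graph degree, it would degrade more gracefully on irregular degree sequences where only a few vertices approach the cutoff; the paper's version has the advantage of stating the deviation additively as $\epsilon n$, which remains meaningful without your side assumption that $\E[Y] = \Theta_P(n)$ (your closing remark correctly disposes of the degenerate case, so this is only a matter of bookkeeping).
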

\begin{proof}
Let $X_{C}$ be a random variable that denotes the number of candidate agents. We can write $X_{C} = \sum_{i = 1}^n X_{i, C}$ where $X_{i, C}$ is an indicator variable that indicates whether or not agent $i$ is a candidate. 

As we have noted above, an agent's status as a candidate is dependent on her context, and as a result, the variables $X_{i, C}$ are not independent. However---given our assumption that the maximum degree of any vertex in $G$ is $O\left(\sqrt[3]{n}\right)$---their dependence is constrained enough that we are able to apply a useful exponential bound involving the \textit{fractional chromatic number} $\chi^*(\Gamma)$ of the constraint graph $\Gamma$ of the random variables $X_{i, C}$. 

(See Theorem~\ref{theorem:dependent_bound} in  Appendix~\ref{appendix:dependent_bound} for additional details).

For our purposes, it is sufficient to use a trivial bound on the fractional chromatic number of any graph. The fractional chromatic number of a graph is at most the \textit{chromatic number} of the graph, which is at most the maximum degree of any vertex plus one. So in $\Gamma$, where edges correspond to dependencies between pairs of random variables ($X_{i, C}, X_{j, C}$), the maximum degree of any vertex---and therefore the fractional chromatic number $\chi^*(\Gamma)$---is $O_{\epsilon, P}\left(n^{\frac{2}{3}}\right)$.

\begin{align*}
    \Pr[|X_{C} - \E[X_{C}]| \geq \epsilon n] \leq 2 \exp \left(\frac{-2\epsilon^2 n}{\chi^*(\Gamma)}\right) \in \frac{1}{\exp \left( \Omega_{\epsilon, P} \left(\sqrt[3]{n} \right) \right)}
\end{align*}
\end{proof}

\begin{lemma} \hfill
\label{lemma:cases}
\begin{enumerate}
    \item Suppose Algorithm~\ref{alg:one} is run with the given inputs, but with modified $\mu' = \mu + \frac{\epsilon}{3}$ and $p' = p + \frac{\delta}{3}$, and Algorithm~\ref{alg:two} is run with the resulting $X_A$ and $X_B$ and the given $\mu^*$. If the result is $\mathbf{\Omega}$, then a revolt of size $\mu^* - \epsilon$ is supported with probability at least $1 - \delta$ under the prior $P$. If the result is $\mathbf{A}$, then a revolt of size $\mu^* - \epsilon$ is supported with probability at least $\Pr[\text{State} = A] - \delta$ under the prior $P$.
    
    \item On the other hand, suppose Algorithm~\ref{alg:one} is run with the given inputs, with modified $\mu' = \mu - \frac{\epsilon}{3}$ and $p' = p - \frac{\delta}{3}$, and Algorithm~\ref{alg:two} is run with the resulting $X_A$ and $X_B$ and the given $\mu^*$. If the result is $\bm{\emptyset}$, then a revolt of size $\mu^* + \epsilon$ is supported with probability at most $\delta$ under the prior $P$. If the result is $\mathbf{A}$, then a revolt of size $\mu^* + \epsilon$ is supported with probability at most $\Pr[\text{State} = A] + \delta$ under the prior $P$.
\end{enumerate}
\end{lemma}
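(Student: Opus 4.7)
The plan is to leverage a monotonicity property of Algorithm~\ref{alg:one} together with the concentration guarantees of Lemmas~\ref{lemma:concentration-alpha-chi} and~\ref{lemma:concentration-candidate}: inflating the thresholds to $(\mu', p') = (\mu + \epsilon/3, p + \delta/3)$ can only shrink Algorithm~\ref{alg:one}'s output, while deflating them can only enlarge it, and the remaining $\epsilon/3$-slack absorbs the concentration error. Consequently, an inflated output of $\mathbf{\Omega}$ or $\mathbf{A}$ translates into a probabilistic lower bound on the realized revolt size under $P$, and a deflated output of $\bm{\emptyset}$ or $\mathbf{A}$ into the matching upper bound.

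First I would prove the monotonicity claim: writing $X_s^P$ for the value Algorithm~\ref{alg:one} would return on input $(G, P)$, I would show $X_{s_1} \leq X_s^P$ in the inflated case and $X_{s_2} \geq X_s^P$ in the deflated case, for each state $s$. The proof is a short case analysis on the inflated candidate-state set $S_C'$, using that $\mu' \geq \mu$ forces $S_C' \subseteq S_C$ and $p' \geq p$ forces the inflated candidate context set to satisfy $C_C' \subseteq C_C$. When $S_C' = \{A, B\}$ both algorithms return $e_s(\chi \cup \alpha)$; when $S_C' = \emptyset$ the inflated output $e_s(\alpha)$ is trivially dominated by any of the three possible outputs under $P$; and when $S_C' = \{A\}$ the output splits into sub-cases depending on whether $e_A(C_C' \cup \alpha) \geq \mu'$, each controlled using $C_C' \subseteq C_C$.

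Next I would verify that under $P$ there is a genuine equilibrium whose realized revolt concentrates around $X_{s_1}$. Consider the strategy profile implicit in the inflated algorithm: $\alpha$-agents always revolt, $\nu$-agents always yield, and $\chi$-agents revolt exactly when their context is an inflated candidate (or all $\chi$-agents revolt in the $\{A, B\}$ branch, or none in the $\emptyset$ branch). The $\epsilon/3$-slack on $\mu'$ guarantees that, once concentration is applied, any $\chi$-agent assigned to revolt sees her $\mu$-threshold met with probability at least $1 - \delta$, while the $\delta/3$-slack on $p'$ ensures her $p$-threshold is also met; combined with $p < 1$ being fixed and $\delta(n)$ vanishing exponentially, this makes her action a best response for sufficiently large $n$. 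Any $\chi$-agent outside the inflated revolt set who would additionally want to revolt under $P$ only enlarges the realized revolt, so the lower bound is preserved in any case.

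Combining these pieces, if the inflated algorithm returns $\mathbf{\Omega}$ then $X_{s_1} \geq \mu^*$ in both states, so by a final application of the concentration lemmas (using the remaining $\epsilon/3$ slack) the realized revolt is at least $\mu^* - \epsilon$ in each state with probability at least $1 - \delta$, and hence unconditionally; if it returns $\mathbf{A}$, only the state-$A$ bound survives, giving probability at least $\Pr[\text{State} = A] - \delta$. Part~2 is the mirror argument using $X_{s_2} \geq X_s^P$ and the opposite tail of the concentration bound. The main obstacle is the case analysis in the $S_C' = \{A\}$ branch: one must check that when the inflated algorithm takes the $e_s(C_C' \cup \alpha)$ sub-case, the algorithm under $P$ cannot drop to the $e_s(\alpha)$ sub-case (which would break monotonicity). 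This follows because $e_A(C_C' \cup \alpha) \geq \mu + \epsilon/3$ and $C_C \supseteq C_C'$ together imply $e_A(C_C \cup \alpha) \geq \mu$, so $P$'s algorithm stays in the first sub-case.
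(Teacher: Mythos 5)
Your proposal is correct and follows essentially the same strategy as the paper: the $\epsilon/3$ and $\delta/3$ slack in the modified thresholds absorbs the concentration error from Lemmas~\ref{lemma:concentration-alpha-chi} and~\ref{lemma:concentration-candidate}, so that actual $\chi$-type agents (whose thresholds $p, \mu$ are easier than the algorithm's $p', \mu'$) are guaranteed to feel secure, and the realized revolt size concentrates around the algorithm's output; your explicit monotonicity lemma comparing $X_{s}$ under the perturbed and original priors is a clean formalization of the case analysis the paper carries out implicitly over the three branches of Algorithm~\ref{alg:one}. The one point the paper addresses that you gloss over is that $\chi$-type agents condition on the \emph{realized} types in their contexts, not merely on the state, so one must additionally argue (using the $O(\sqrt[3]{n})$ degree bound) that this private information is negligible and can be absorbed into $\delta(n)$ — a minor but worthwhile addition to your argument.
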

\begin{proof}
This proof decomposes into two analogous arguments about (1) and (2), which themselves each contain two analogous arguments. We include only the first one in detail here, below, and claim the rest via analogy.
\begin{claim*}
Suppose that, in instance (1), the result is $\mathbf{\Omega}$. Then, the probability that a revolt of size $\mu^* - \epsilon$ is supported in both states is at least $1 - \frac{1}{\exp \left( \Omega \left(\sqrt[3]{n} \right) \right)}.$
\end{claim*}

\begin{claimproof}[Proof of Claim]
There are three cases for the nature of this revolt. 

\begin{enumerate}
    \item[(I)] The revolt consists of all $\alpha$-type agents.
    
    By Lemma~\ref{lemma:concentration-alpha-chi}, we know that, in either state, the probability that the expected number of $\alpha$-type agents differs from the expected number of $\alpha$-type agents by a multiplicative factor of $\epsilon$ is $\frac{1}{\exp \left( \Omega \left(\sqrt[3]{n} \right) \right)}$. Therefore, the probability that a revolt of size $\mu^* - \epsilon$ is supported in both states is at least $1 - \frac{1}{\exp \left( \Omega \left(\sqrt[3]{n} \right) \right)}.$
    
    \item[(II)] The revolt consists of all $\alpha$- and all $\chi$-type agents.
    
    In this case, the presence of $\chi$-type agents slightly complicates the analysis. Actual $\chi$-type agents have slightly easier thresholds to satisfy ($p$ and $\mu$) than the $\chi$-type agents considered by the algorithms ($p'$ and $\mu'$). Our algorithms determined that, in expectation, agents with $p'$ and $\mu'$ thresholds would feel secure enough to revolt. Each actual $\chi$-type agent, then, must also feel secure enough to revolt, not just in expectation: By Lemma~\ref{lemma:concentration-alpha-chi}, the probability that the actual number each of $\alpha$-type and $\chi$-type agents differs from its respective expectation by a multiplicative factor of $\frac{\epsilon}{6}$ is $\frac{1}{\exp \left( \Omega \left(\sqrt[3]{n} \right) \right)}$. Combining these, then, $\chi$-type agents believe with probability $1 - \frac{1}{\exp \left( \Omega \left(\sqrt[3]{n} \right) \right)} \geq p$ that at least $\mu' - \frac{\epsilon}{3} = \mu$ agents feel secure enough to revolt, and are thus themselves secure enough to revolt. Applying Lemma~\ref{lemma:concentration-alpha-chi} (from the perspective of Algorithm~\ref{alg:two} this time, not the perspective of $\chi$-type agents) we again incur two error terms of $\frac{\epsilon}{6}$ (one each for the $\chi$- and $\alpha$-type agents). Thus, we can conclude the probability that a revolt of size $\mu^* - \epsilon$ is supported in both states is at least $1 - \frac{1}{\exp \left( \Omega \left(\sqrt[3]{n} \right) \right)}.$ 
    
    \item[(III)] The revolt consists of all $\alpha$-type agents and all candidate agents.
    
    In the final case, we consider candidate agents instead of all agents of type $\chi$, and proceed exactly as we did in the second case, using Lemma~\ref{lemma:concentration-alpha-chi} to conclude that the number of $\alpha$-type agents concentrate and Lemma~\ref{lemma:concentration-candidate} to conclude that the number of candidate agents concentrates. Once again, the result is that the probability that a revolt of size $\mu^* - \epsilon$ is supported in both states is at least $1 - \frac{1}{\exp \left( \Omega \left(\sqrt[3]{n} \right) \right)}$.
\end{enumerate}  
\end{claimproof}

\begin{claim*}
Suppose that, in instance (1), the result is $\mathbf{A}$. Then, the probability that a revolt of size $\mu^* - \epsilon$ is supported in both states is at least $\Pr[\textup{State} = A] - \frac{1}{\exp \left( \Omega \left(\sqrt[3]{n} \right) \right)}.$
\end{claim*}

\begin{claimproof}[Proof of Claim]
The proof of this claim is analogous to the previous proof, where events that are assigned probability at least $1 - \frac{1}{\exp \left( \Omega \left(\sqrt[3]{n} \right) \right)}$ are instead assigned probability at least $\Pr[\text{State} = A] - \frac{1}{\exp \left( \Omega \left(\sqrt[3]{n} \right) \right)}$.
\end{claimproof}
 
The proof of the analogous claims for (2) are themselves analogous to the above cases for (1).
 
Finally, we note here that when $\chi$-type agents decide whether or not their thresholds are satisfied, they are not solely relying on their estimates of the fractions of different kinds of agents, as we describe above. They have additional knowledge, since they see the realized types of the agents in their context. However, for small graphs, $\delta(n)$ can be chosen to account for this. As the graph grows, since each agent has at most $O\left(\sqrt[3]{n}\right)$ neighbors, the consequences of observing the types of a few adjacent agents is negligible after conditioning on the state. As a result, for large enough $n$, after the agent reasons about the state, the probabilistic effect of the agent viewing the types in her context is subsumed by the $\epsilon$ error term. Furthermore, the appropriate choice of $\delta(n)$ also accounts for the $\frac{1}{\exp \left( \Omega \left(\sqrt[3]{n} \right) \right)}$ terms present in each of the claims stated above, for each $n$.
\end{proof}

\begin{proof}[Proof of Theorem~\ref{theorem:alg}]
It follows from Lemma~\ref{lemma:poly} that Algorithm~\ref{alg:three} terminates in polynomial time with respect to the inputs $G$ and $P$. We also note that Algorithm~\ref{alg:three} terminates in polynomial time with respect to $\frac{1}{\epsilon}$ and $\frac{1}{\delta}$. 

Further, it follows from Lemma~\ref{lemma:cases} that when Algorithm~\ref{alg:three} outputs a case, that case is always true. It only remains to show that when exactly one case in the statement of \textsc{Promise Revolt} is true, then Algorithm~\ref{alg:three} necessarily outputs that case: 

Here, the key is our use of three different potential revolt sizes ($\mu^* - \epsilon$, $\mu^*$, and $\mu^*+\epsilon$) and 3 different priors ($P^-$, $P$, and $P^+$) in defining the three cases of \textsc{Promise Revolt}. By promising that exactly one of those three cases is true, we guarantee that the values of $\mu$ and $p$ are sufficiently far---distance at least $\epsilon$ for $\mu$ and distance at least $\delta$ for $p$---from the crucial decision thresholds in Algorithm~\ref{alg:one} (e.g. the value $e_B(\chi \cup \alpha)$, which is used to determine whether or not $B$ is a candidate state)\footnote{Whether or not a decision threshold is in the set of \textit{crucial} decision thresholds---the thresholds from which our promise guarantees $p$ and $\mu$ are sufficiently far---depends on $\mu^*$. In addition to $e_B(\chi \cup \alpha)$, it can also include $e_A(C_C \cup \alpha)$ and the value of $p$ below which $e_A(C_C \cup \alpha) \geq \mu$ holds for candidate contexts and above which $e_A(C_C \cup \alpha) < \mu$.}. This ensures that both calls to Algorithm~\ref{alg:one} will return the same values.

\vspace{1 ex}

We can illustrate this counterfactually:

Let $\mu = e_B(\chi \cup \alpha) + \frac{\epsilon}{2}$ with $e_B(\chi \cup \alpha) > \mu^* > e_B(C_C \cup \alpha)$ and let $e_A(C_C\cup \alpha) > \mu + \frac{\epsilon}{2}$. 

Then, $A$ and $B$ would both be candidate states under the prior $P^-$. As a result, Algorithm~\ref{alg:one} (run with prior $P^-$) would return $X_A = e_A(\chi \cup \alpha)$ and $X_B = e_B(\chi \cup \alpha)$. Note that $X_A > X_B > \mu^*$, so Algorithm~\ref{alg:two} would return $\mathbf{\Omega}$ when run with inputs $X_A$, $X_B$, and $\mu^*$. The same analysis holds for $P^-$ with $\mu$ and $p$ incremented by $\frac{\epsilon}{3}$ and $\frac{\delta}{3}$, respectively, so applying Lemma~\ref{lemma:cases} implies that the case $\mathbf{\Omega}$ is true.

On the other hand, only $A$ would be a candidate state under the prior $P$. Algorithm~\ref{alg:one} run with the prior $P$ would return $X_A = e_A(C_C\cup \alpha)$ and $X_B = e_B(C_C \cup \alpha)$. Run with these inputs (and $\mu^*$), Algorithm~\ref{alg:two} would return $\mathbf{A}$. The same analysis holds for $P$ with $\mu$ and $p$ incremented \textit{and} decremented by $\frac{\epsilon}{3}$ and $\frac{\delta}{3}$, which by Lemma~\ref{lemma:cases} implies that the case $\mathbf{A}$ is true. We can conclude that these values of $\mu$ and $\mu^*$ must be excluded by our promise for any input to \textsc{Promise Revolt} for which our assumed constraints hold.

An argument similar to this counterfactual argument suffices to exclude any value of $p$ or $\mu$ that is insufficiently  far from a crucial decision threshold in Algorithm~\ref{alg:one} (along with associated constraints on $\mu^*$) present in an instance of \textsc{Promise Revolt}.
\end{proof}

Now we can discuss the information that we gain from solving \textsc{Promise Revolt}. In doing so, it is useful to refer to Figure~\ref{fig:equilibria}, which contains the following illustration: Given $\mu^*$, we identify all values of $q$ for which a revolt of size $\mu^*$ is supported with probability at least $q$ (dark blue and yellow regions), all values of $q$ for which revolt of size $\mu^*$ may be supported with probability $q$ (light blue, yellow, and grey regions), and all values of $q$ for which a revolt of size $\mu^*$ is supported with probability strictly less than $q$ (white regions). The blurry regions between distinct colors represent the inputs to \textsc{Promise Revolt} for which two of the cases would overlap (and are therefore excluded from the set of inputs by the ``promise'').

If we could perfectly solve \textsc{Revolt}, then we would be able to perfectly define the boundaries---there would be no blurry regions between distinct colors, as in Figure~\ref{fig:equilibria}. The shape of the resulting figure would characterize the equilibria of the network revolt game in the following sense: The blue column (corresponding to values of $\mu^*$ for which only case $\mathbf{\Omega}$ of \textsc{Promise Revolt} is true) would show the sizes of revolts that are supported in equilibrium with high probability regardless of the state, the next yellow and white column (corresponding to values of $\mu^*$ for which only case $\mathbf{A}$ is true) would show the sizes of revolts that are supported in equilibrium with high probability given that the state is $A$, and the last grey and white column (corresponding to values of $\mu^*$ for which case $\bm{\emptyset}$ is true) would show the sizes of revolts that, with high probability, are not supported in equilibrium (or, equivalently, the sizes of revolts that are supported in equilibrium with low probability). 

Although we cannot perfectly solve \textsc{Revolt}, as shown in Figure~\ref{fig:equilibria}, solving \textsc{Promise Revolt} with Algorithm~\ref{alg:three} allows us to approximate this shape. By choosing $\epsilon$ to be very small, we can make the boundary cases only relevant for very small subsets of possible values of $\mu^*$ and values of $\mu$ in the prior $P$. And, as we have shown via the proof of Theorem~\ref{theorem:alg}, as $n$ grows, $\delta(n)$ quickly becomes very small. Consequently, the range of possibly forbidden values of $p$ in the prior $P$ also quickly becomes small and the probabilities for each class of distinct equilibria described above (when they exist), converge to $1$, $\Pr[\text{State} = A]$, and $0$, respectively. 

\begin{figure}
    \centering
    \includegraphics[width=0.63\textwidth]{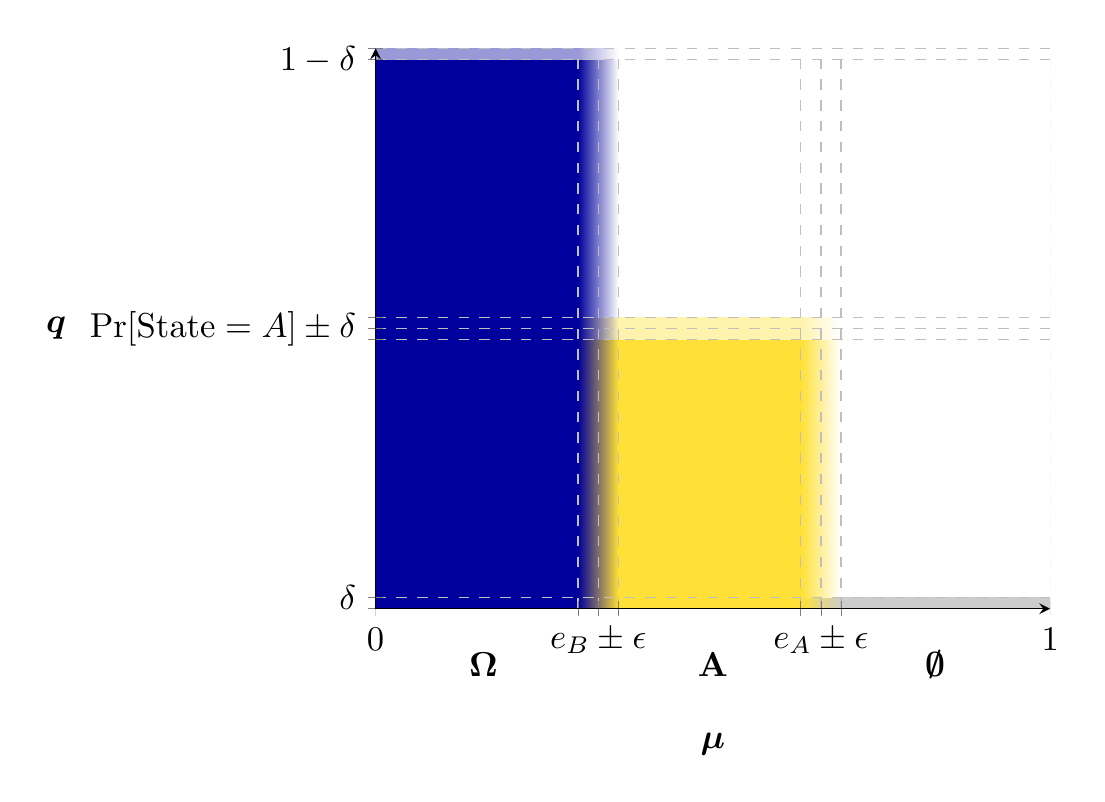}
    \caption{Illustrating what we learn from solving \textsc{Promise Revolt} in an archetypal case.\\ Here, $e_B$ and $e_A$ denote the expected size of the revolt supported in states $B$ and $A$, respectively.}
    \label{fig:equilibria}
\end{figure}

Lastly, we note that an interesting and surprising corollary to our analysis above is that Algorithms~\ref{alg:one}, \ref{alg:two}, and \ref{alg:three} never use any information about the graph beyond the \textit{degree sequence} of $G$---an anonymous list that records the degree of each vertex in the graph. That is, our algorithms require a list of the degrees of the agents, but do not require that any degree value be labelled with the identity of any agent, nor do they require any further information about the set of edges in the graph. Rather, the results of the algorithms are valid for any graph that is consistent with the provided degree sequence, because the concentration of the fraction of candidate agents and agents of each type supersedes the additional structure imposed by any concrete edge set consistent with the degree sequence.

We record this fact with the following proposition:

\begin{proposition}
To compute \textsc{Promise Revolt} in polynomial time, we only require the degree sequence of the graph $G$; the graph itself is not required.
\end{proposition}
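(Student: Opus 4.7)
The plan is to verify, for each quantity that Algorithms~\ref{alg:one}, \ref{alg:two}, and \ref{alg:three} compute, that it depends only on the multiset of vertex degrees of $G$ and on the prior---not on the identities of vertices or on the specific edge set. Since Algorithm~\ref{alg:three} invokes nothing outside of Algorithms~\ref{alg:one} and \ref{alg:two} (with $P$ replaced by $P^{+}$ or $P^{-}$), and Algorithm~\ref{alg:two} consumes only $X_A$, $X_B$, and $\mu^*$, the whole claim reduces to checking the outputs produced inside Algorithm~\ref{alg:one}.

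The key observation I would make is that the type assignment is i.i.d.\ according to $D_T^s$ conditional on the state, so for any agent $i$ the conditional distribution over her context depends only on her degree $d_i$. Concretely, a context of agent $i$ is fully described by the pair (central type, multiset of neighbor types), and by independence
\[
\Pr\bigl[c(i) = c \mid \text{State} = s\bigr]
\]
is a function of $d_i$, the central type, the neighbor multiset, and $D_T^s$ alone. Applying Bayes' rule yields that $\Pr[\text{State}=s \mid c]$ has the same property. In particular, membership of a context in the candidate set $C_C$ is determined by the degree of the central agent, its type, and the multiset of neighbor types.

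Using this, I would rewrite each expected fraction $e_s(\tau)$ used by Algorithm~\ref{alg:one} as
\[
e_s(\tau) \;=\; \frac{1}{n}\sum_{i \in I} \Pr\bigl[\text{agent $i$ is of ``type'' $\tau$} \mid \text{State}=s\bigr],
\]
where for $\tau \in \{\alpha, \chi, \nu\}$ each summand is simply $D_T^s(\tau)$ and for $\tau = C_C$ each summand is a polynomial-time computable function of $d_i$ only. Hence every $e_s(\tau)$ is a sum over the degree sequence of terms determined by the sequence plus $P$; no further graph structure enters. Every branch and comparison in Algorithm~\ref{alg:one} is made against $p$, $\mu$, or such an $e_s$, so the outputs $X_A$, $X_B$ depend only on the degree sequence and $P$, and the proposition follows by chaining this through Algorithms~\ref{alg:two} and \ref{alg:three}.

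The main obstacle is bookkeeping rather than mathematics: Algorithm~\ref{alg:one} is written as an enumeration over contexts, but given only a degree sequence there are no ``vertices'' to enumerate. I would therefore reorganize the enumeration to run over each distinct degree $d$ appearing in the sequence and, for each such $d$, enumerate the central type together with the multiset of neighbor types of cardinality $d$. When the number of types is a constant, the number of such multisets for degree $d$ is polynomial in $d$, and the degrees themselves are bounded by $n$, so the rearranged computation remains polynomial time, matching Lemma~\ref{lemma:poly} and completing the argument.
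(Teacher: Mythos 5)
Your argument is correct and follows essentially the same route as the paper's (brief) justification: contexts are identity-agnostic, so conditional on the state the distribution of an agent's context --- and hence the Bayesian posterior $\Pr[\text{State}=s\mid c]$ and membership in $C_C$ --- depends only on that agent's degree, making every quantity $e_s(\cdot)$ in Algorithm~\ref{alg:one} a function of the degree sequence and the prior alone. The only point worth adding for completeness is that correctness (not just computability) also transfers to every graph consistent with the given degree sequence, because the concentration arguments in Lemmas~\ref{lemma:concentration-alpha-chi} and~\ref{lemma:concentration-candidate} use only the maximum degree, which is itself read off the degree sequence.
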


\section{Broadening the Setting}
\label{section:broadening}
The core intuition from our analysis above actually applies to a broader class of settings than that of Section~\ref{subsection:fundamental-case}. The additional complexity present in the broader settings generally warrants additional complexity in the analysis, but the core of the argument follows the same reasoning in each case. As a result, we briefly consider the broader settings in the following sections and illustrate how the logic of the initial analysis can be extended to cover those cases. In doing so, we focus on changes to Algorithm~\ref{alg:one}; the subsequent changes required to adapt Algorithms~\ref{alg:two} and \ref{alg:three} are straightforward.

\subsection{Smallest Supported Revolt}
Algorithm~\ref{alg:one} computes the expected size of the largest supported revolt in each state and Algorithm~\ref{alg:two} compares the result with some given value $\mu^*$ to see in which states, if any, the size of supported revolt is at least $\mu^*$. However, this setting has a natural symmetry, which allows us to similarly compute the expected size of the smallest supported revolt in each state with Algorithm~\ref{alg:one} (and compare the results to some given value using Algorithm~\ref{alg:two}). 

To do this, we define $D_T^{A'} = D_T^B$, but swapping the probabilities of $\alpha$- and $\nu$-type agents (so that $\Pr_{D_T^{A'}}[\alpha] = \Pr_{D_T^B}[\nu]$, and vice versa) and similarly define $D_T^{B'} = D_T^A$, swapping the probabilities of $\alpha$- and $\nu$-type agents. These definitions retain the required property that $\Pr[\nu | \text{State} = A] < \Pr[\nu | \text{State} = B]$. Then, we run Algorithm~\ref{alg:one} with the input $G$ and the prior $P = (1 - p, 1 - \mu, D_T^{A'}, D_T^{B'}, D_S)$.

Intuitively, this corresponds to the smallest supported revolt for the following reason: An agent of type $\chi$ feels secure enough to revolt when she believes with probability at least $p$ that at least a $\mu$ fraction of agents will feel secure enough to revolt. Conversely, she does not feel secure enough to revolt when she believes with probability at least $1 - p$ that at least a $1 - \mu$ fraction of agents will not feel secure enough to revolt. As a result, in the setting with the modified inputs, as described, agents who feel secure enough to revolt (including $\alpha$-type agents) correspond precisely with agents who do not feel secure enough to revolt in the original setting (just as $\alpha$-type agents in the modified setting correspond to $\nu$-type agents in the original setting). Algorithm~\ref{alg:one} computes the expected size of the largest supported revolt in each state, so in the modified setting, it computes a value that corresponds to the expected size of the largest group of agents, in some equilibrium of the original setting, who do not feel secure enough to revolt. Since this value is maximized, the size of the supported revolt is minimized. The precise expected size of that minimum revolt in each state is $1 - X_s$ for each $X_s$ returned by the algorithm.

\subsection{General Graphs}
Suppose that we impose no restriction on the degree of the vertices in $G$ in the statement of Theorem~\ref{theorem:alg}. It turns out that our results still hold. The existence of high-degree $\chi$-type vertices (i.e. those that would not exist when we assume an $O\left(\sqrt[3]{n}\right)$ bound on the maximum degree of any vertex), complicates the analysis, but only in the case where $A$ is the only candidate state. When $A$ and $B$ are both candidate states, we only need to calculate the expected number of all $\chi$-type agents, regardless of their degree. When there are no candidate states, $\chi$-type vertices are irrelevant.

When $A$ is the only candidate state, however, there is a key difference:  The presence of high-degree agent contexts in the set of candidate contexts would affect our earlier analysis using Theorem~\ref{theorem:dependent_bound} (in the proof of Lemma~\ref{lemma:concentration-candidate}), because the contexts of high-degree agents are correlated with many other agents' contexts.

Because of this, though, high-degree agents (with degree at least $c \cdot \sqrt[3]{n}$ for sufficiently large $c$)  have a unique perspective on the graph; their contexts contain a significant amount information that they can use in determining the state. More concretely, suppose that agent $i$ is a high-degree agent and let $N(i)$ be the set of neighbors of $i$ in $G$. 

Let $X = \sum_{j \in N(i)} X_j$, where 
\[
    X_j = \begin{cases}
            0, & \text{if $j$ is type $\nu$},\\
            1, & \text{if $j$ is type $\alpha$ or type $\chi$}.
          \end{cases}
\]
Now, for any $\epsilon_0 > 0$, applying a standard Chernoff-Hoeffding bound, we have:
\begin{align}
    \Pr\left[|X - \E[X]| \geq \epsilon_0 \left (c \cdot \sqrt[3]{n}
    \right)\right] \leq 2 \exp \left(\frac{-2{(\epsilon_0 c)}^2 n^{\frac{2}{3}}}{n^{\frac{1}{3}}}\right) \in \frac{1}{\exp \left( \Omega_{\epsilon, P} \left(\sqrt[3]{n} \right) \right)}.
\end{align}

In particular, if we choose $\epsilon_0$ such that $| \E[X | \text{State} = A] - \E[X | \text{State} = B]| > 2 \epsilon_0 \left (c \cdot \sqrt[3]{n} \right)$, then agent $i$ (and by the same argument, any high-degree agent) can correctly determine the state with enough accuracy that their error can be absorbed into the $\delta$ error term with an appropriate choice of $\delta(n)$. As a result, when the state is $A$, all high-degree $\chi$-type agents will behave like candidate agents (recall that $A$ is a candidate state).  

However, for computational purposes, we will not lump them in with the low-degree candidate agents, because Lemma~\ref{lemma:concentration-candidate} can only apply to low-degree candidates.  Instead, in the first \textbf{if} statement after the condition that $A$ is the only candidate state (the line in Algorithm~\ref{alg:one} that reads ``\textbf{if} $e_A(C_C \cup \alpha) \geq \mu$ \textbf{then}''), we calculate $e_A(C_C \cup \alpha \cup H_{\chi})$ instead of just $e_A(C_C \cup \alpha)$, where $H_{\chi}$ refers to the set of high-degree $\chi$-type vertices. If this is at least $\mu$, then when we calculate $X_A$, we include $H_{\chi}$. However, when we calculate $X_B$, we only include $H_{\chi}$ if $e_B(C_C \cup \alpha \cup H_{\chi}) \geq \mu$, since those high-degree agents will not $p$-believe that the state is $A$ regardless of the state the way that (low-degree) candidate agents will. If $e_B(C_C \cup \alpha \cup H_{\chi}) < \mu$, we calculate $X_B = e_B(C_C \cup \alpha)$, as described in Algorithm~\ref{alg:one}.

Next, we must show it is possible for the agents (and the algorithm) to accurately calculate the expected number of high-degree agents of type $\chi$ and know that the actual number of such agents sufficiently concentrates around that expected number. Here, we can rely on the fact that the agents and the algorithm know the degree sequence of the graph. There is some subtlety involved: If the number of high-degree agents is small---less than $\epsilon n$---then we cannot provide a very useful concentration bound for the number of high-degree agents of type $\chi$. However, we can essentially ignore the high-degree agents in this case and absorb the error we incur by ignoring them into our choice of $\delta(n)$.

On the other hand, if there are at least $\epsilon n$ high-degree agents, we can again use a standard Chernoff-Hoeffding bound to conclude that the actual number of high-degree $\chi$-type agents will, with high probability, be close to the expected number of such agents. The error here will be smaller than the error from Lemma~\ref{lemma:concentration-candidate} and so can be absorbed there. 

Finally, there is one additional subtlety we must address for high-degree agents. While it is true that, as previously mentioned, their contexts contain a significant amount of information that they can use in determining the state, their contexts actually contain more than just information about the state---they contain information about the actual realization of types for a significant number of agents. This point is the primary conceptual reason to make the distinction between high-degree and low-degree agents in the first place. Consequently, we need to show that high-degree agents tend to behave as if they only knew the state, when in fact it is possible that the number of agents of a certain type in their context differs greatly from its expectation and as a result they have additional information to use beyond the state of the world. For this, we again appeal to our previous argument that resulted in the bound expressed by the inequality (1), above.

That argument demonstrates that it is highly improbable for the information in a high-degree agent's context to contradict what their belief would be solely given knowledge of the state. For example, if a high-degree agent uses her context to determine that the state is $A$, with high probability it will not also be the case that the context that she uses to make that determination has (far) fewer agents of any type than would be expected given that the state is $A$. Consequently, a high-degree agent will tend to act as if she is just calculating expectations and acting off of them (like a low-degree agent would), even though in actuality she has quite a bit of additional information.

\subsection{A Larger Set of States}
Suppose that there are $m$ states in $S$, with an associated probability distribution $D_T^s$ over the set of three agent types $T$ for each $s \in S$.  
Intuitively, the key insight is similar to the case when $A$ and $B$ are both candidate states in our fundamental setting: When an agent $p$-believes that the true state is in some subset of the states, then she must believe that her $\mu$ threshold would be satisfied in each of the states in that subset in order for her to feel secure enough to revolt. 

Guided by this intuition, we modify Algorithm~\ref{alg:one} in the following way: As before, we identify the set of candidate states by determining for which states $e_s(\chi \cup \alpha) \geq \mu$. If all the states are candidate states, then as in the analogous instance for the two-state setting, there is no need to reason about belief. So, we set $X_s = e_s(\chi \cup \alpha)$ for each state $s$. On the other hand, if not all of the states are candidate states, then we do need to reason about belief. First, we set $C_C$, the set of candidate contexts, to contain the contexts of all agents who believe with probability at least $p$ that the state is in some subset of the candidate states. Then, for each state we compute $e_s(C_C \cup \alpha)$. If for some state $s'$ in the set of candidate states $e_{s'}(C_C \cup \alpha) < \mu$, then we remove $s'$ from the set of candidate states and recompute the set of candidate contexts. We repeat this iterative removal procedure until $e_s(C_C \cup \alpha) \geq \mu$ for all states $s$ in the current set of candidate states. Once this is true (or the set of current candidate states is empty), then $X_s = e_s(C_C \cup \alpha)$ for all the current candidate states and $X_s = e_s(\alpha)$ in the remaining states.

Essentially, this procedure succeeds, because each time we define the set of candidate states, we are as optimistic as possible about the number of agents who may feel secure enough to revolt. This, though, is not too optimistic---i.e. we do not include any superfluous agents---because when all of the states in the set of candidate states satisfy $e_s(C_C \cup \alpha) \geq \mu$, then all of the agents with candidate contexts feel secure enough to revolt, based on our intuition above.

More formally, we can use an inductive argument: The initial set of candidate contexts contains the contexts of every agent for whom it is possible that they will feel secure enough to revolt. At each step when we remove contexts from the set of candidate contexts, any agent with a context that is removed will never feel secure enough to revolt. They do not feel secure enough to revolt with the given set of candidate contexts and $\alpha$-type agents, and this set, at each step in the iteration, is maximal. Thus, at each step the set $C_C \cup \alpha$ contains every agent who feels secure enough to revolt. So, when the iteration terminates, the resulting (possibly empty) set of agents in $C_C \cup \alpha$ that all feel secure enough to revolt must be the largest such set.

The number of steps in this procedure is polynomial in $m$, the number of states, so we consider it still to be efficient.

It is worth noting that this result is driven by the specific way we have defined the problem. In our setting, we are only required to consider the equilibrium with the maximum revolt size in each state. As we have shown, enumeration of all of the subsets of states is not required to compute this. However, if we were to try to characterize all of the equilibria of our network revolt game, without making any further assumptions on the states, a stronger method, potentially including enumeration of all of the subsets of states, would be necessary.

\section{Experiments}
\label{section:experiments}
Beyond the theoretical insight they provide, our algorithms from Section~\ref{section:applying} also have practical utility for exploring the relationship between strategic coordination (i.e. a revolt supported in some equilibrium) and various parameters---both parameters of network models from which the graph $G$ can be generated and parameters of the prior $P$. To demonstrate this utility, we return to the concrete setting of our motivating example. For convenience, the relevant parameters from this setting are summarized in Table~\ref{tab:motiv_example_params}.

\begin{table}[t]
\caption{Summary of the prior $P$ from the motivating example.}
\begin{subtable}[t]{.48\textwidth}
   \centering
   \caption{Summary of the parameters.}
   {\renewcommand{\arraystretch}{1.25}
\begin{tabular}{ |c|c| } 
 \hline
 Parameter & Value\\
 \hline
  $p$ & $\frac{2}{5}$ \\
  \hline
  $\mu$ & $\frac{1}{2}$\\
  \hline
\end{tabular}
}
    \label{tab:prior}
\end{subtable}
\begin{subtable}[t]{.48\textwidth}
   \centering
   \caption{Summary of the distributions.}
   {\renewcommand{\arraystretch}{1.25}
\begin{tabular}{ |c|c|c| } 
 \hline
  & $s = A$ & $s = B$ \\
  \hline
  $\Pr[\text{State} = s]$ & $\frac{1}{2}$ & $\frac{1}{2}$\\
  \hline
  $\Pr[\alpha | \text{State} = s]$ & 0 & 0\\
  \hline
  $\Pr[\chi | \text{State} = s]$ & $\frac{4}{5}$ & $\frac{1}{5}$\\
  \hline
  $\Pr[\nu | \text{State} = s]$ & $\frac{1}{5}$ & $\frac{4}{5}$\\
 \hline
\end{tabular}
}
   \label{tab:type_distributions}
\end{subtable}

\label{tab:motiv_example_params}
\end{table}

In the following subsections, we focus on how the size of the largest supported revolt in expectation in each state varies with the chosen parameters. That is, we apply Algorithm~\ref{alg:one} to compute our dependent variables.

However in general, Algorithm~\ref{alg:three} also has practical utility. For example, we are able apply Algorithm~\ref{alg:three} to instantiate concrete constructions of Figure~\ref{fig:equilibria} in a given setting with an appropriate choice of $\delta$ (given $\epsilon$ and the prior). For the setting of our motivating example, this is illustrated in Figure~\ref{fig:example}.

\begin{figure}
    \centering
    \includegraphics[width=0.63\textwidth]{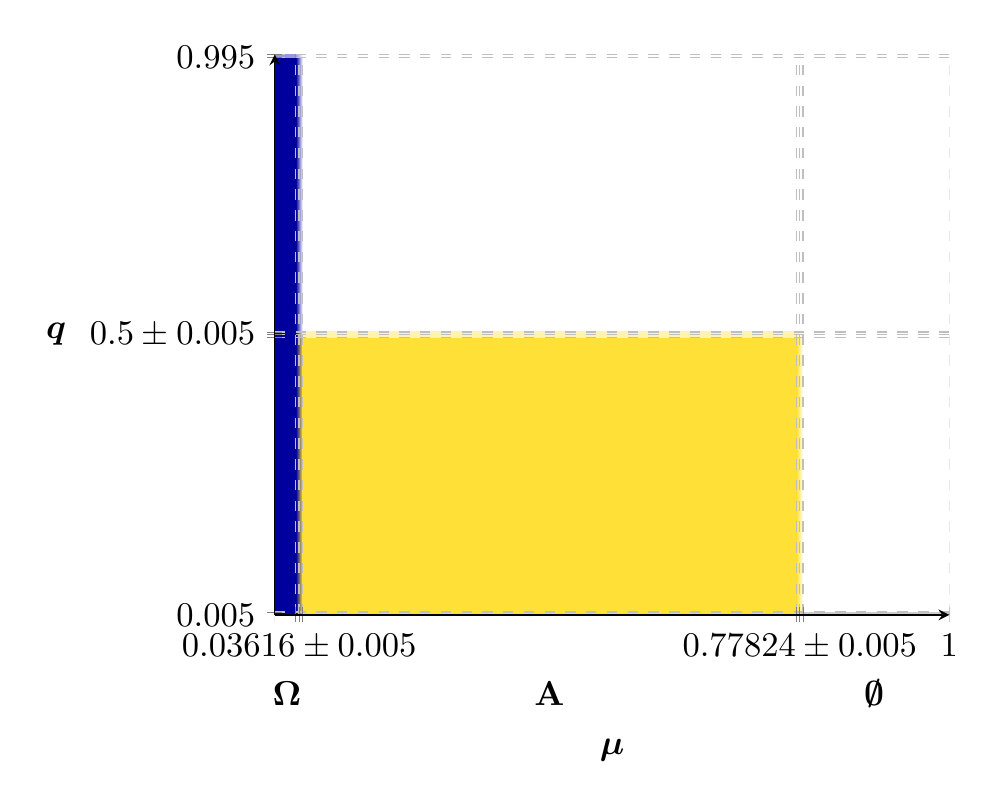}
    \caption{Instantiating Figure~\ref{fig:equilibria} in the setting of our motivating example, with $\delta = \epsilon = 0.005$.}
    \label{fig:example}
\end{figure}

\subsection{Varying Parameters of Network Models}
\label{subsection:varying_network_params}
In this experiment, we explore the relationship between the size of the largest supported revolt in expectation---denoted as in Figure~\ref{fig:equilibria} by $e_s$ for the state $s \in \set{A, B}$---and the parameters of the various network models described below. $d$-regular graphs, which have constant degree sequences, serve as a baseline which helps us interpret the results from the other three network models. Each of the other models are each common social network models that generate graphs according to two parameters.
The first parameter for each social network model is the number of vertices $n$, which we set to $n = 1000$ for each model.
The second parameter is unique to the model (we discuss the details for each model in the associated paragraph below). We treat these parameters as the independent variables. 

\begin{figure}
\centering
\begin{subfigure}{.5\textwidth}
  \centering
  \includegraphics[width=\linewidth]{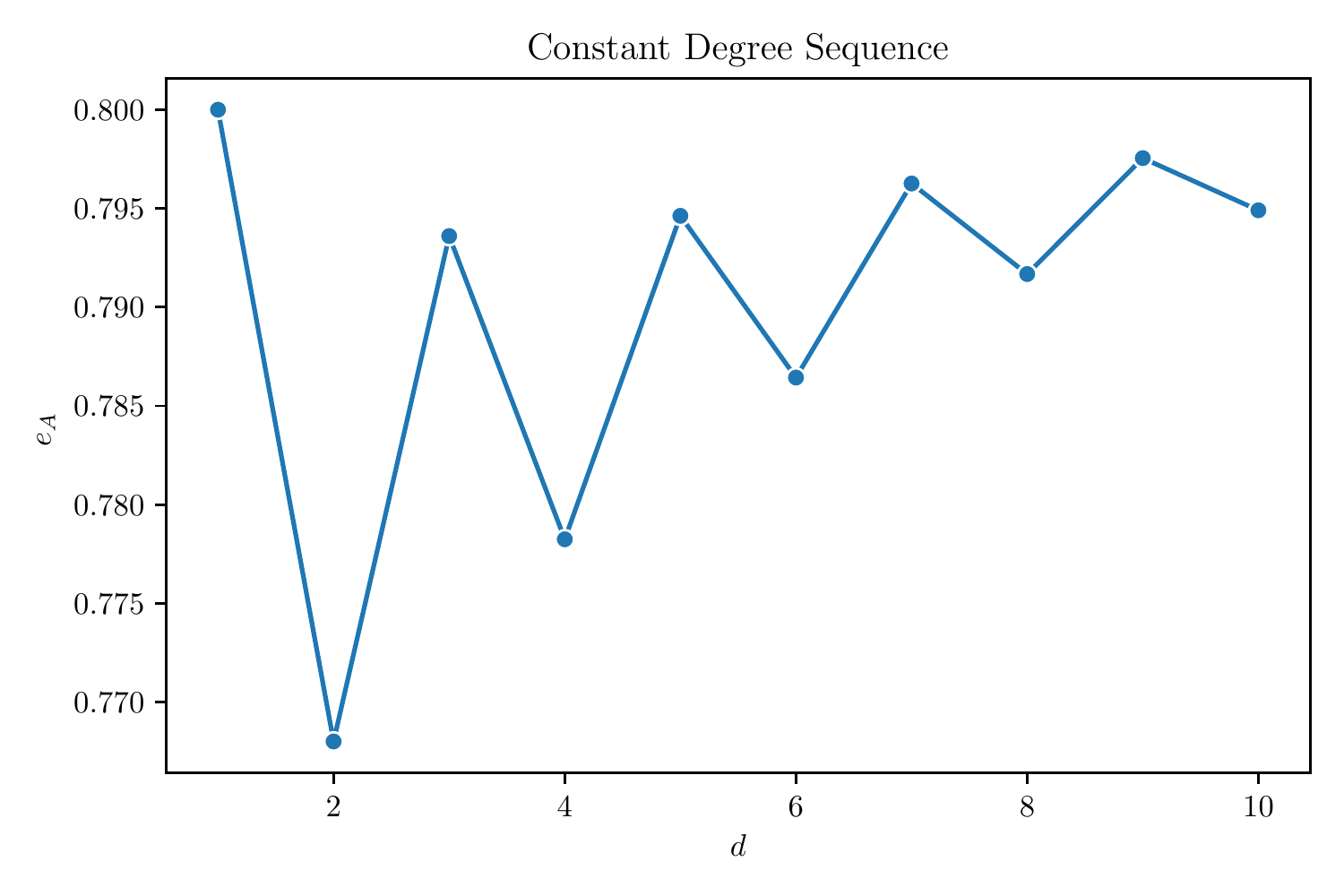}
\end{subfigure}%
\begin{subfigure}{.5\textwidth}
  \centering
  \includegraphics[width=\linewidth]{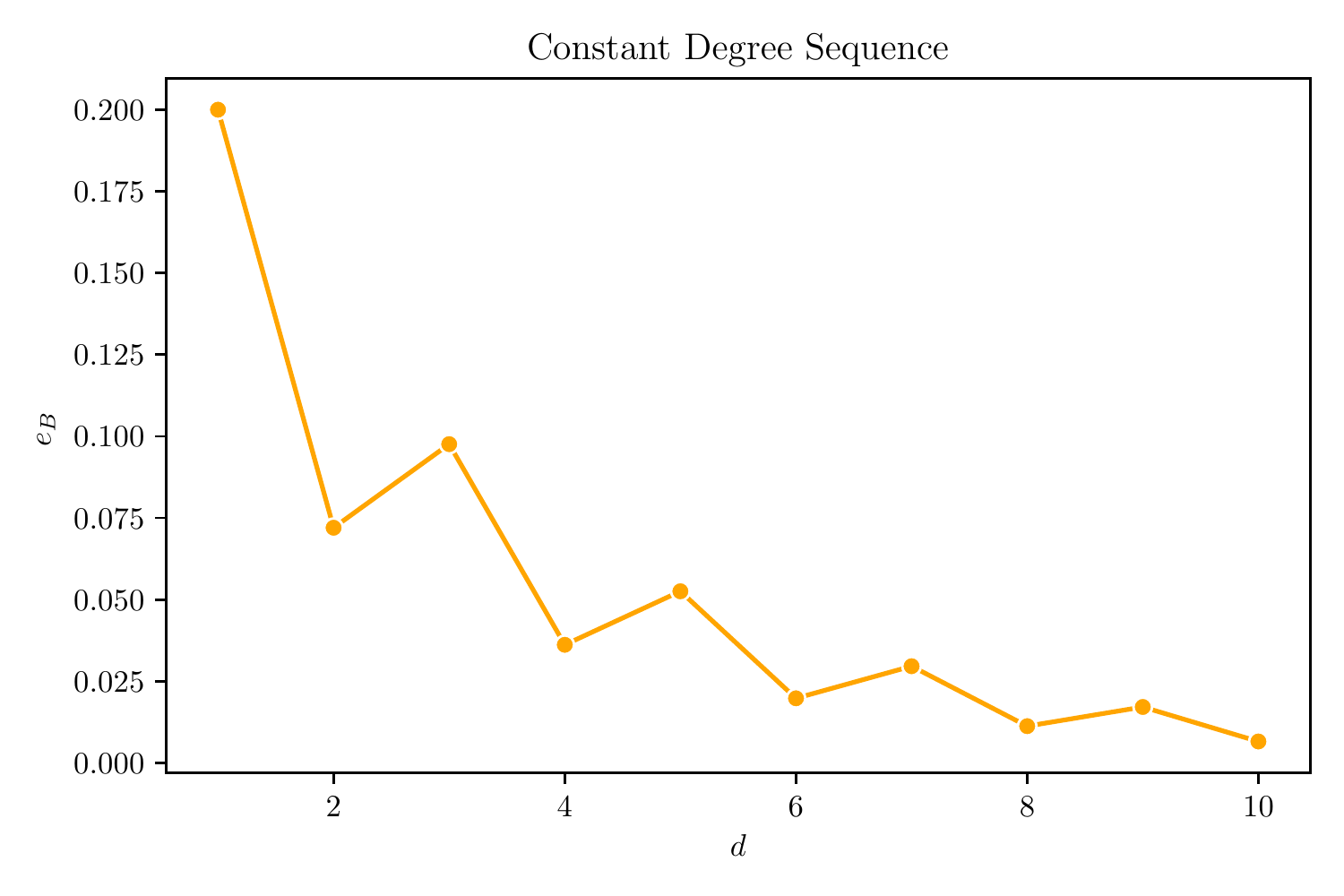}
\end{subfigure}

\begin{subfigure}{.5\textwidth}
  \centering
  \includegraphics[width=\linewidth]{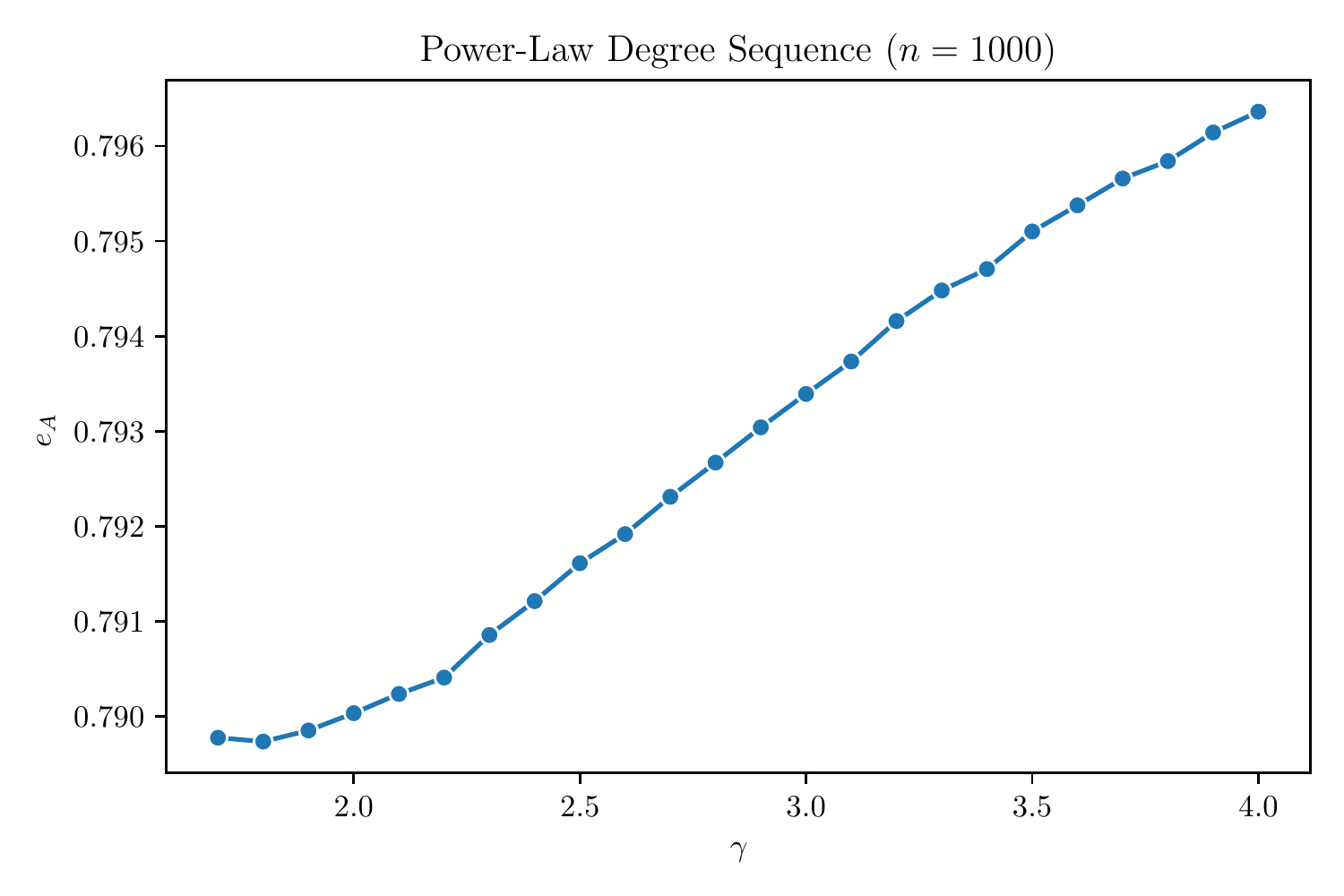}
\end{subfigure}%
\begin{subfigure}{.5\textwidth}
  \centering
  \includegraphics[width=\linewidth]{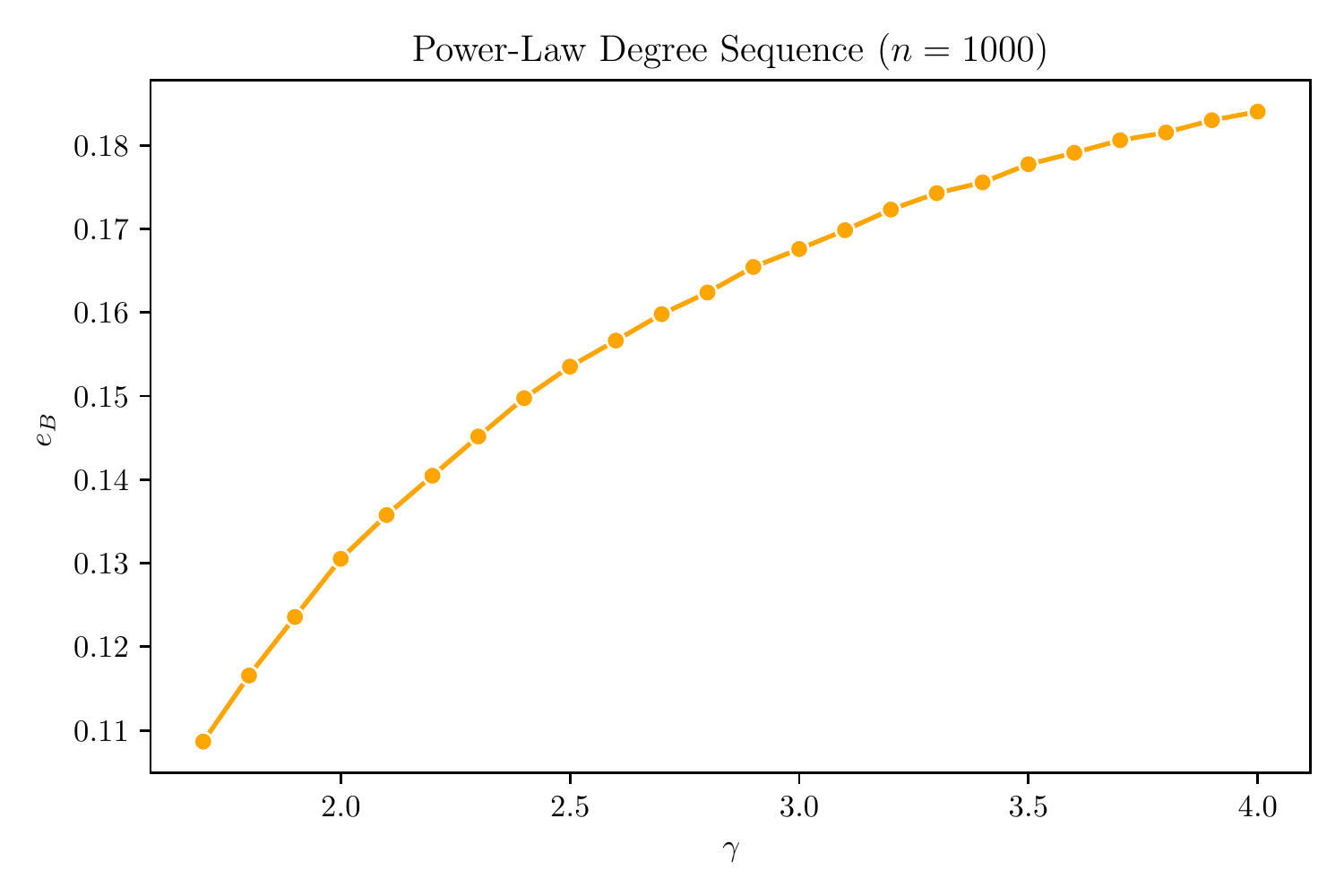}
\end{subfigure}

\begin{subfigure}{.5\textwidth}
  \centering
  \includegraphics[width=\linewidth]{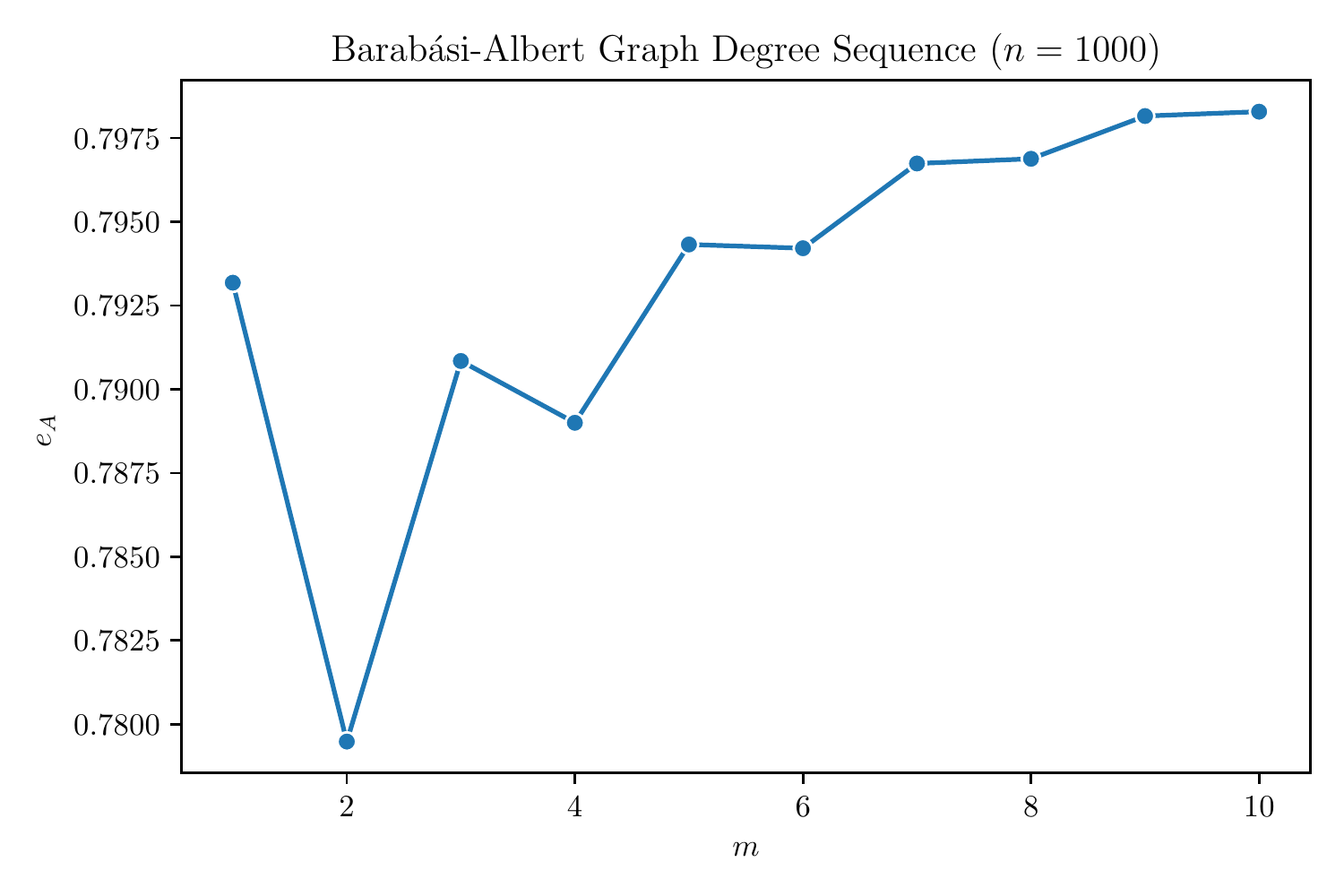}
\end{subfigure}%
\begin{subfigure}{.5\textwidth}
  \centering
  \includegraphics[width=\linewidth]{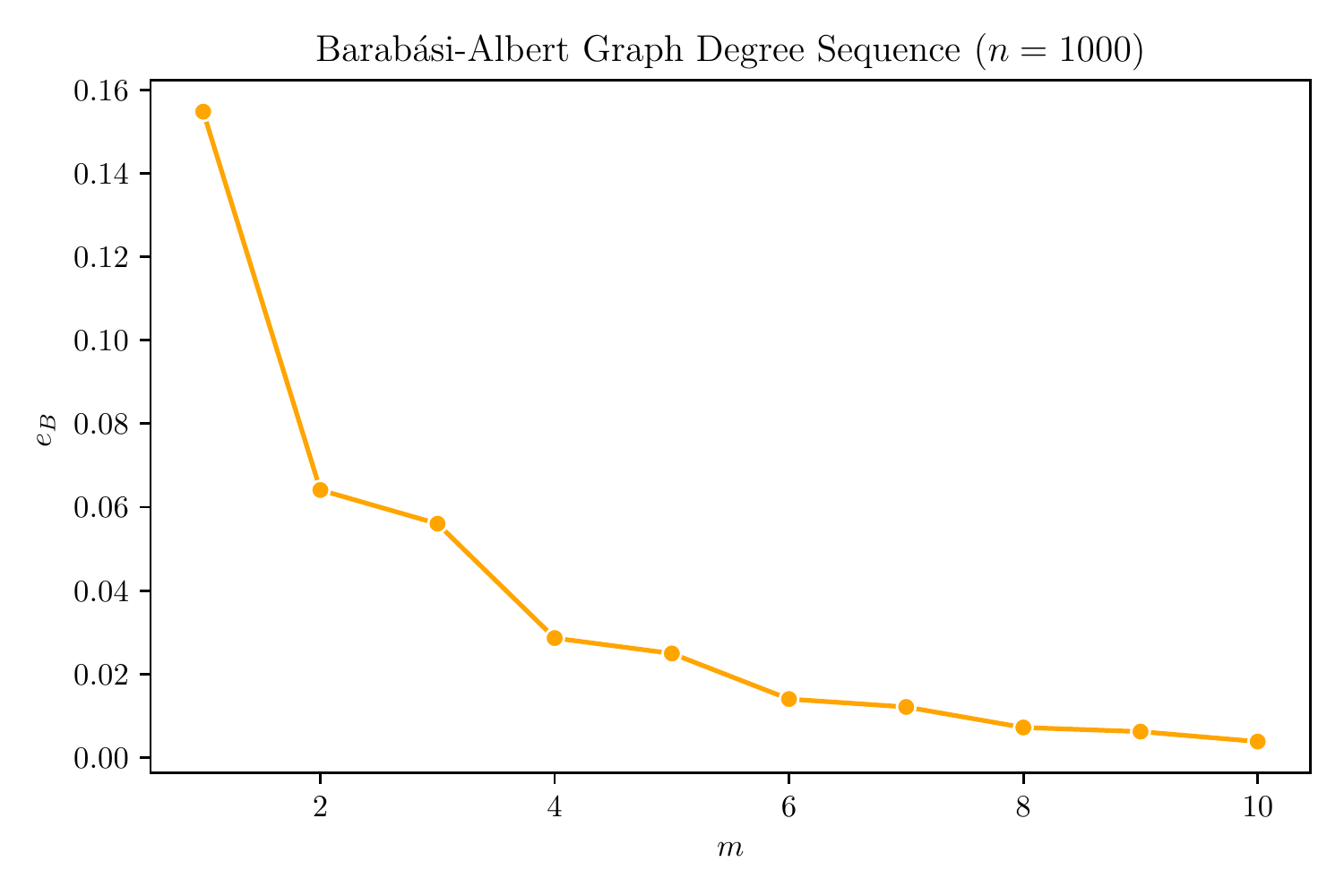}
\end{subfigure}

\begin{subfigure}{.5\textwidth}
  \centering
  \includegraphics[width=\linewidth]{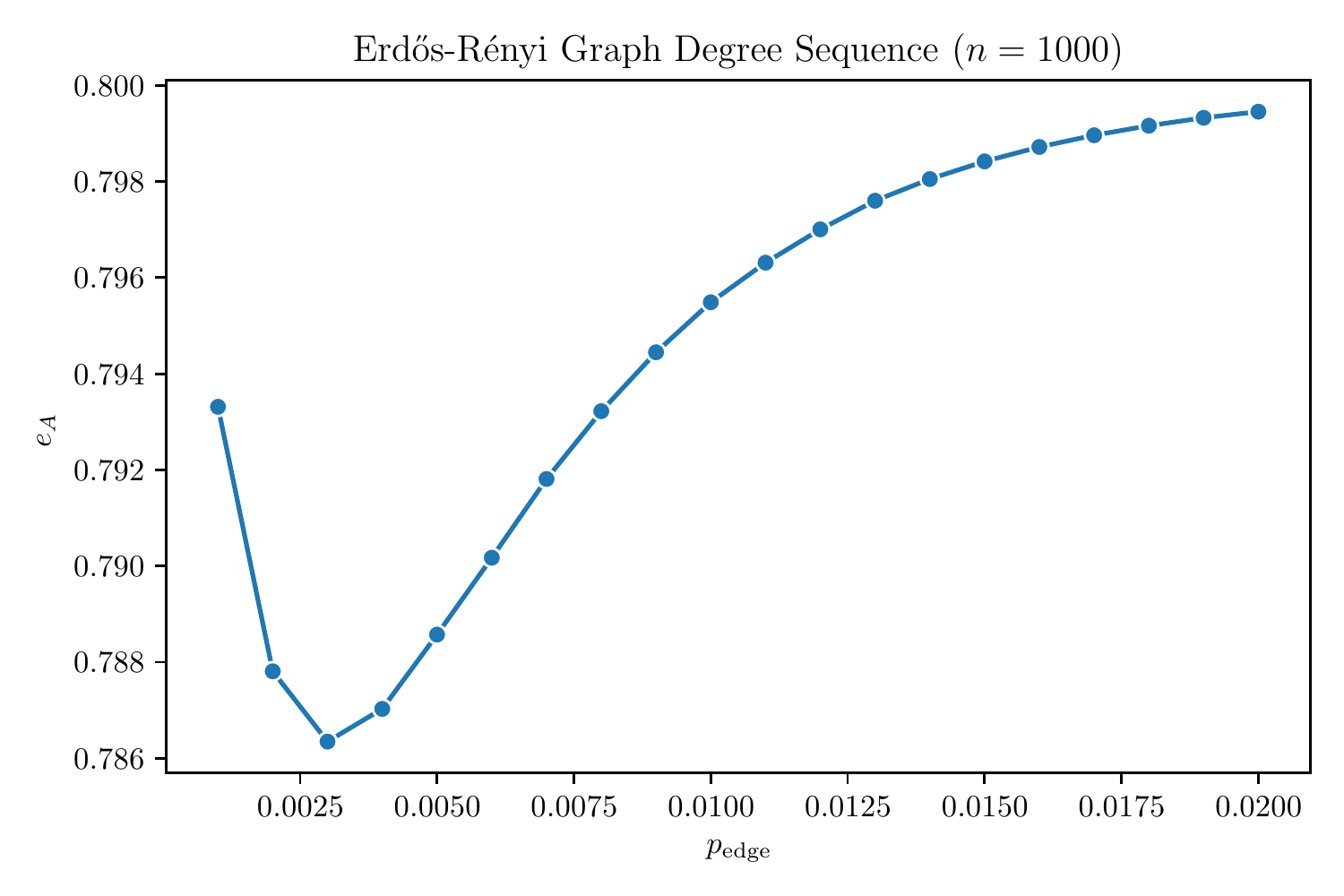}
\end{subfigure}%
\begin{subfigure}{.5\textwidth}
  \centering
  \includegraphics[width=\linewidth]{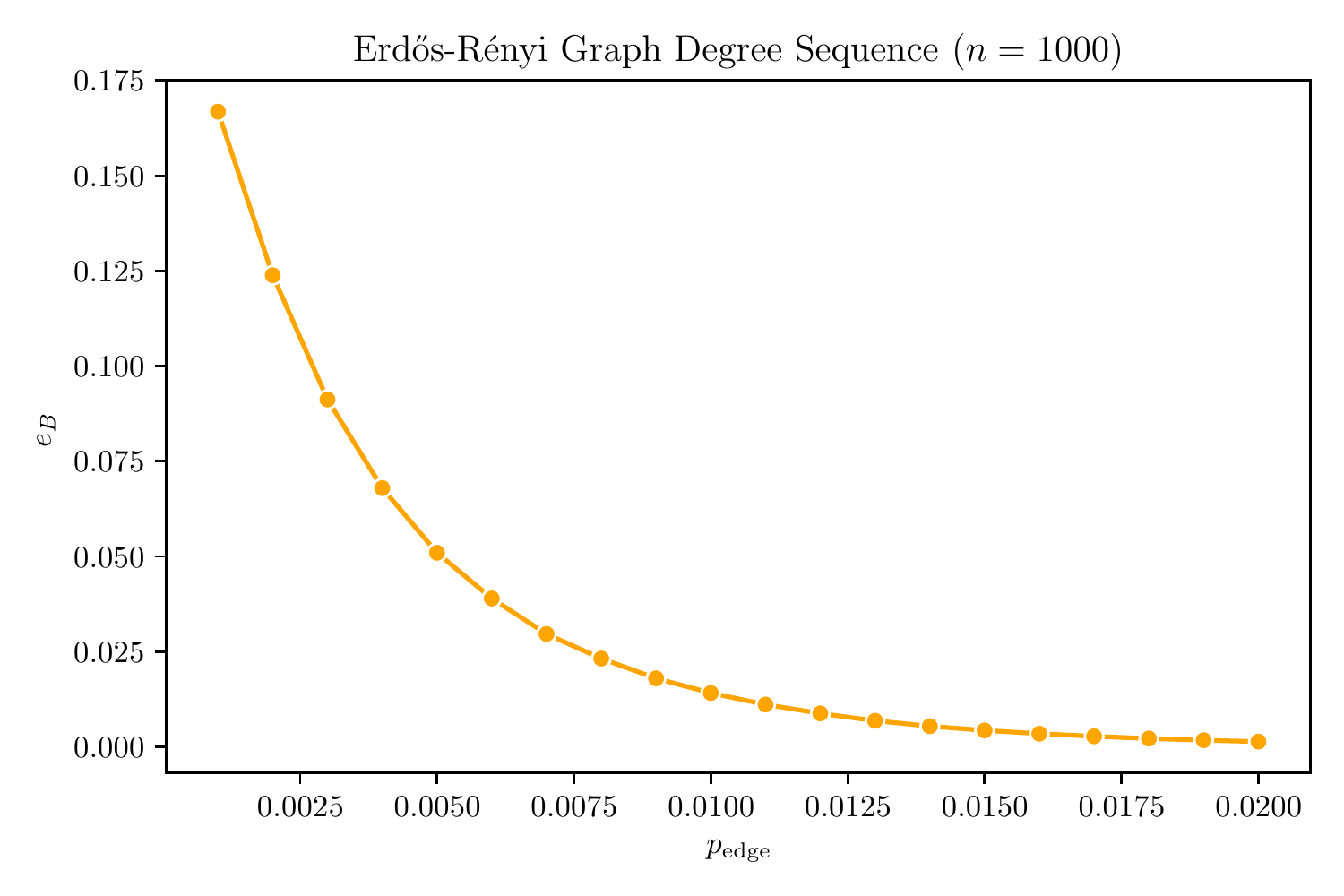}
\end{subfigure}
\caption{Results of Experiment~\ref{subsection:varying_network_params}---Exploring the relationship between $e_A$ and $e_B$ and the underlying parameters of various network models in the setting of our motivating example.}
\label{fig:varying_network_params}
\end{figure}

\subsubsection{Methods}
\label{subsub:network_params_methods}
For constant-degree graphs, the degree sequence is deterministic given the degree $d$. So, we ran an implementation of Algorithm~\ref{alg:one} a single time with a degree sequence consisting of $1000$ entries equal to $d$ and the prior $P$ from our motivating example (see Tables \ref{tab:prior} and \ref{tab:type_distributions}) as input and recorded the output values.

For the social network models, the generated degree sequences are not deterministic. So, for each social network model and for each value of the relevant parameter, we generated 100 graphs of size $n = 1000$ from the model with that particular parameter value. Then, for each of the 100 generated graphs, we ran our implementation of Algorithm~\ref{alg:one} with the degree sequence of the graph and the prior $P$ from our motivating example as input and recorded the mean of the output values (i.e. the average $e_s$ for each state $s$). 

The results are illustrated in Figure~\ref{fig:varying_network_params}.

\paragraph{Constant Degree Sequence.} For constant degree sequences, the relevant parameter is $d$, the degree of each vertex. 

As we noted above, the degree sequence is deterministic given $d$ and it is trivial to generate. Furthermore, the output of Algorithm~\ref{alg:one} is agnostic to $n$, so the choice of $n = 1000$ is somewhat irrelevant. The result would be the same for any $n$, so the key constraint is whether a $d$-regular graph on $n$ vertices exists for given $n$ and $d$. This is the case for any $n \geq d+1$ such that $nd$ is even, and therefore is true for $n = 1000$.

\paragraph{Power-Law Degree Sequence.} For power-law degree sequences, the relevant parameter is $\gamma$, which represents the exponent of a power-law distribution. Specifically, in a graph with a power-law degree distribution, the probability of observing a vertex with degree $d$ is proportional to $d^{-\gamma}$. 

For our experiment, we used the \texttt{powerlaw} package in Python \cite{Alstott2013} to generate integer sequences of length $n = 1000$ which are distributed according to a power-law distribution with exponent $\gamma$ and then used the Havel-Hakimi algorithm implementation from the \texttt{networkx} package in Python to determine whether the sequences corresponded to feasible graph degree sequences \cite{Hagberg2008}. This two-step process was repeated until we had 100 such feasible degree sequences for each value of $\gamma$.\footnote{The range of $\gamma$ in our experiments was somewhat constrained by this feasibility checking. For $\gamma < 1.7$, generating even one sequence that corresponded to a feasible graph was intractable.} 

\paragraph{Barab{\'a}si-Albert Graph Degree Sequence.} For degree sequences from Barab{\'a}si-Albert graphs (also referred to as preferential attachment graphs) \cite{Barabasi1999}, the relevant parameter is $m$, which represents the number of edges that ``incoming'' vertices attach (preferentially) to ``existing'' vertices.\footnote{See Chapter 5 of Jackson's book on networks for an overview of preferential attachment models \cite{Jackson2010}.}

For our experiment, we used a generator from the \texttt{networkx} package in Python to generate Barab{\'a}si-Albert Graphs \cite{Hagberg2008}.

\paragraph{Erd\H{o}s-R{\'e}nyi Graph Degree Sequence.} For degree sequences from Erd\H{o}s-R{\'e}nyi random graphs, the relevant parameter is $p_{\text{edge}}$, which represents the probability that an edge between any fixed pair of vertices will exist in the graph after the edges are sampled.

For our experiment, we used a generator from the \texttt{networkx} package in Python to generate Erd\H{o}s-R{\'e}nyi Graphs \cite{Hagberg2008}.

\subsubsection{Results}
\label{subsub:network_params_results}
As illustrated in the two columns of Figure~\ref{fig:varying_network_params}, the results from both states display interesting phenomena and the distinct states have distinct features of interest.

\paragraph{State $A$.} For each type of degree sequence, we see that $e_A$ displays non-monotonicity with respect to the parameter serving as the independent variable. 

The results from the constant degree sequence are useful in interpreting this phenomenon. First, we see that there are parity effects---agents of type $\chi$ with odd degrees are more likely to sufficiently believe that the state is $A$. In the most extreme case, in fact, agents of type $\chi$ with degree one will always sufficiently believe that the state is $A$, because regardless of their neighbor's type, they will believe that the state is $A$ with probability at least $\frac{1}{2} > \frac{2}{5}$ (the value of $p$ in the prior). Second, controlling for the parity effects and the unique case of the degree-one agents, agents with higher degrees tend to be more likely to correctly determine the state. 

This observations can be applied to explain the non-monotonicity we observe in the results for the degree sequences drawn from social network models. For the Barab{\'a}si-Albert and Erd\H{o}s-R{\'e}nyi graphs, as with constant degree sequence graphs, the modal degree value in the graph increases monotonically with the independent variable parameter, so we observe non-monotonicity in the lower values of the relevant parameter as degree 2 vertices become more common than degree 1 vertices. For the power-law degree sequences, the modal degree value is 1 for all relevant values of $\gamma$, but the frequency of degree 1 vertices increases monotonically with $\gamma$. The non-monotonicity in the lower values of the parameter corresponds to a small range in which degree 2 vertices become more common, and the effect of the increasing frequency of degree 1 vertices is not yet large enough to offset the effect of the greater frequency of vertices of degree 2. 

\paragraph{State $B$.} For each type of degree sequence, the shape of the curves for $e_B$ can be explained with roughly the same explanation used for $e_A$ above. The key differences is that agents of type $\chi$ with degree one are no longer an exception to the trend that higher-degree agents tend to be more accurate in determining the state (after controlling for parity effects). In particular, this explains why we see the non-monotonicity of $e_A$ for the degree sequences drawn from social network models, but not of $e_B$.

The more interesting phenomenon is that the range of $e_B$ is significantly larger than the range of $e_A$ for each graph. This phenomenon is relatively simple to explain: it is uncommon for low-degree agents of type $\chi$---and impossible for such agents with degree one---to observe a context that convinces them that the state is more likely to be $B$, since their own type provides evidence against this. Their own evidence is weighted highly, because their total amount of evidence is small.

Even in light of this explanation, though, it is striking that we see $e_B$ range from including almost every agent of type $\chi$ to almost no agents of type $\chi$ (or just over half of the agents of type $\chi$, in the case of the power-law degree sequences). Even excluding parameter values where degree one agents proliferate (which produce the most extreme values of $e_B$), the range of $e_B$ still varies significantly more than the range of $e_A$. Consequently, when the state is $B$, we could expect that a significant fraction of the agents of type $\chi$ would feel secure enough to a revolt in conditions that would be very unlikely to support a revolt in which they would feel comfortable participating. This phenomenon is of particular interest with regard to the formation and maintenance of unpopular social norms like the culture of excessive alcohol consumption in American colleges and universities and the more general phenomenon of \textit{pluralistic ignorance} \cite{Bicchieri2005, Centola2005, Chwe1999}.

\subsection{Varying \textit{p} in the Prior}
\label{subsection:varying_p}
In this experiment, we explore the relationship between $e_A$ and $e_B$ and the value of the parameter $p$ in the prior. Recall that $p$ represents the belief threshold required for agents of type $\chi$.

\begin{figure}
\centering
\begin{subfigure}{.5\textwidth}
  \centering
  \includegraphics[width=\linewidth]{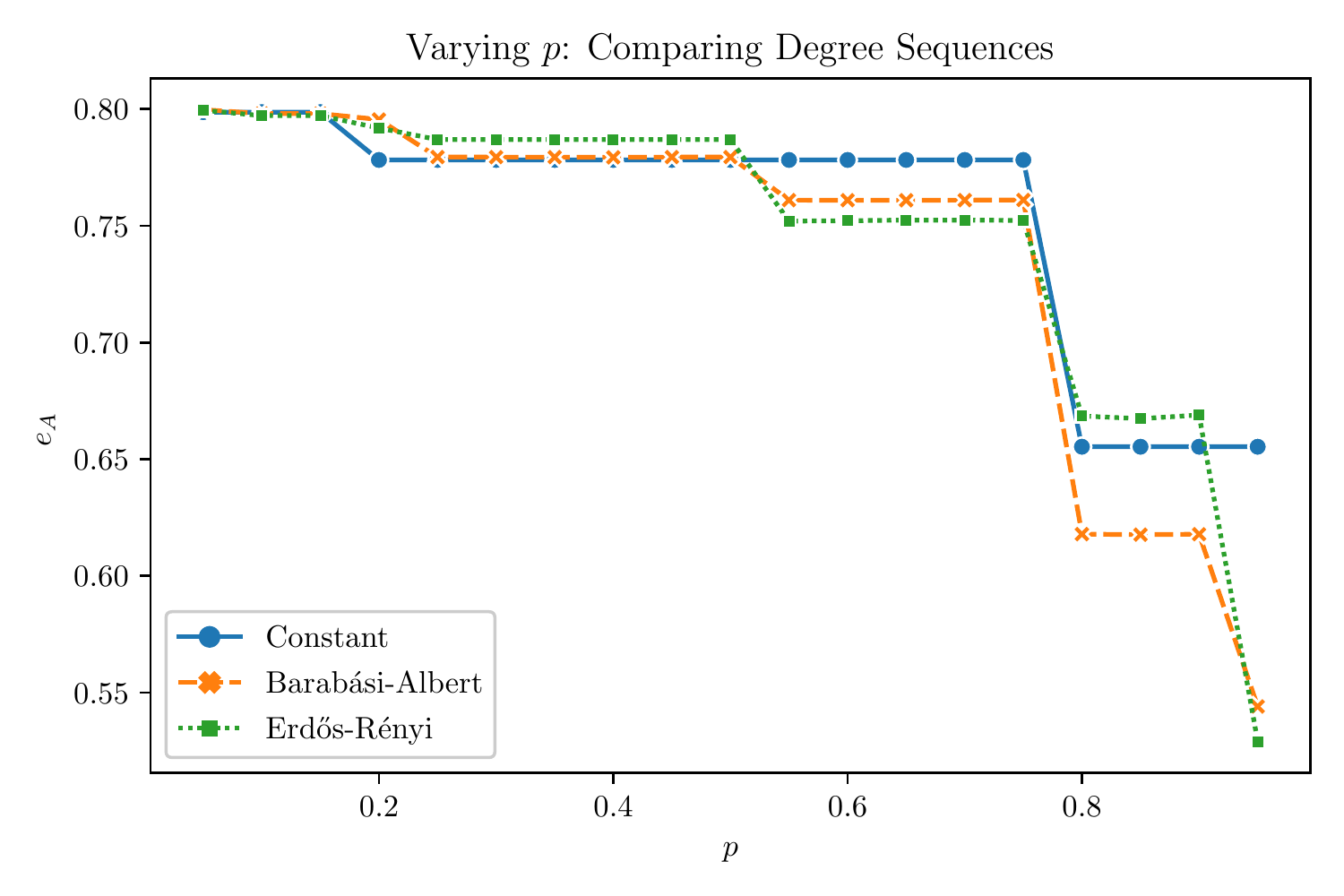}
\end{subfigure}%
\begin{subfigure}{.5\textwidth}
  \centering
  \includegraphics[width=\linewidth]{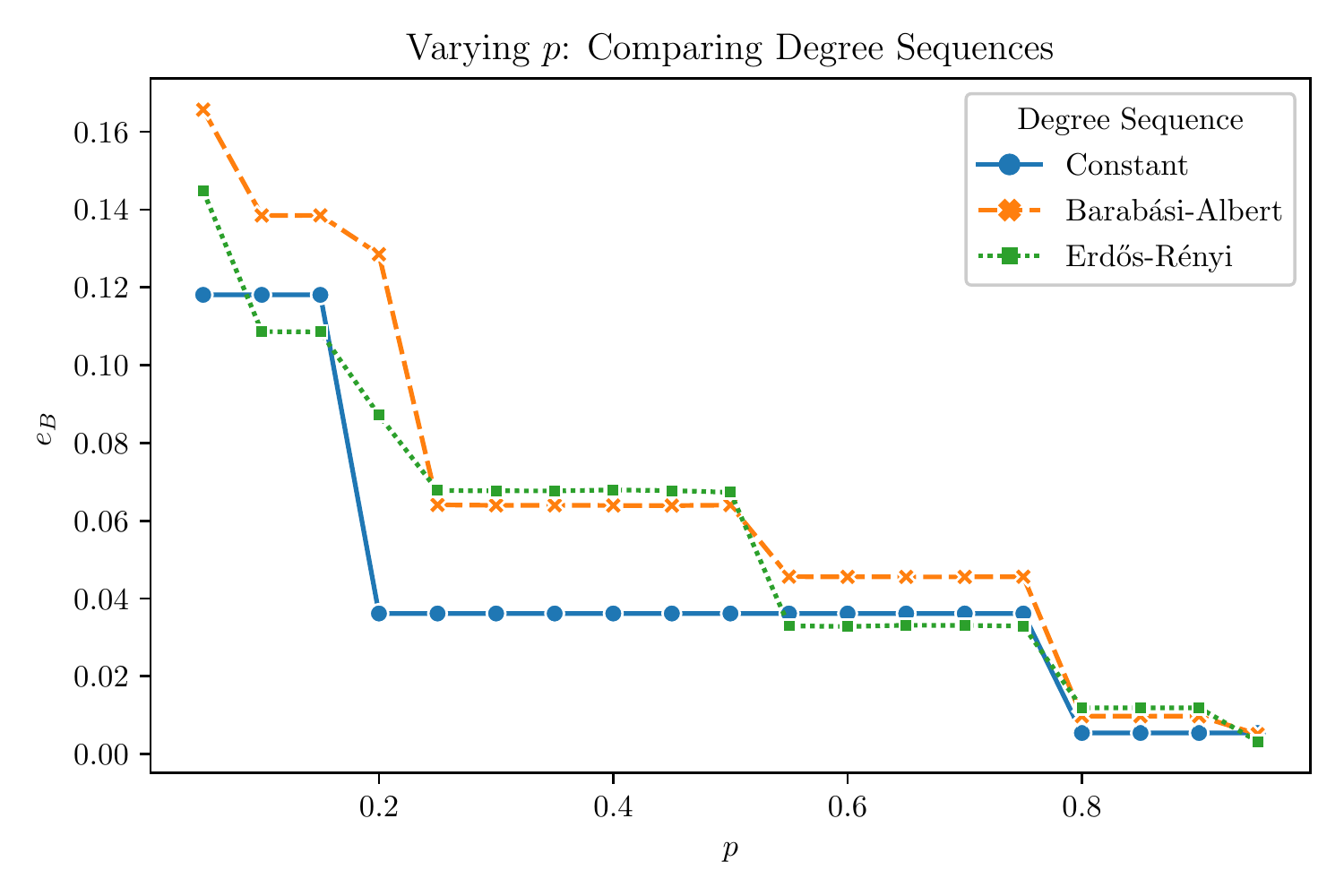}
\end{subfigure}

\caption{Results of Experiment~\ref{subsection:varying_p}---Exploring the relationship between $e_A$ and $e_B$ and the value of the parameter $p$ in the prior for various network models with average degree 4 in the setting of our motivating example.}
\label{fig:varying_p}
\end{figure}

\subsubsection{Methods}
\label{subsub:p_methods}
For this experiment, we first fix the underlying parameters of the degree sequences that we varied in the previous experiment so that for each degree sequence, the average or expected degree for each vertex will be 4. Thus, each model is expected to produce degree sequences that have the same average degree, but with different variances. For the constant degree sequence this is accomplished with $d = 4$, for the degree sequences of Barab{\'a}si-Albert graphs it is with $m = 2$, and for the degree sequences of Erd\H{o}s-R{\'e}nyi graphs it is with $p_{\text{edge}} = \frac{1}{250}$.\footnote{In this experiment, including power-law degree sequences for some fixed $\gamma$ would be somewhat redundant; the degree sequences of Barab{\'a}si-Albert graphs follow a power-law distribution with $\gamma = 3$.} As before, for all of the degree sequences, we set $n = 1000$.

With those parameters, we generated graphs and ran Algorithm~\ref{alg:one} as in the previous experiment---running Algorithm~\ref{alg:one} once for the constant degree sequence and generating 100 graphs and recording the average outputs of Algorithm~\ref{alg:one} for the social network models ---for each value of $p$ (i.e. $p$ between $0.05$ and $0.95$, incremented in steps of size $0.05$).

The results are illustrated in Figure~\ref{fig:varying_p}.

\subsubsection{Results}
\label{subsub:p_results}
The behavior of $e_A$ and $e_B$ for each type of degree sequence in this experiment is easy to predict---they both decrease monotonically in steps as $p$ increases. As a result, in this experiment, it is more interesting is to compare across the different degree sequences. As an example, for degree sequences from Barab{\'a}si-Albert graphs, the values of $e_A$ and $e_B$ are somewhat lower than the analogous values for degree sequences from Erd\H{o}s-R{\'e}nyi graphs for relatively high values of $p$---the values where we might expect $p$ thresholds to be for real agents contemplating a potentially costly action like violating an existing social norm. This might indicate that in real social networks---which in terms of degree sequences, among the three models in this experiment, are best approximated by Barab{\'a}si-Albert graphs---we might expect agents to be more conservative about taking costly actions than they would be in random graphs. On the other hand, the opposite is true for $p$ values that are closer to, but still above, $\frac{1}{2}$, so we might expect agents in real social networks to be relatively less conservative for actions with a moderate cost.

\section{Discussion}
\label{section:discussion}
We proposed that the notion of common belief could be relaxed to a notion of factional belief in order to be more suited to social network settings and gave a natural definition of factional belief, drawing heavily from previous work on common belief. We then applied this definition theoretically and experimentally in a setting inspired by prior work about common knowledge and revolt games on networks to show how this definition moves beyond the limitations of previous work by being applicable in general graphs.

\subsection{Open Questions and Future Work}
\label{subsection:future_work}
The most clear direction for future work is to continue to apply factional belief in new settings and use it as a tool to understand strategic coordination and cooperation on networks. In particular, as mentioned in the introduction, the work of Stephen Morris provides many examples of applying common belief as a tool for gaining insight into a diverse range of settings. We believe that applying factional belief can yield similar results in social network settings, and that this paper represents an initial step in that process.

Additionally, though, there are some more subtle technical open questions regarding our definition of factional belief that are also worth exploring in future work that we state below.
\begin{enumerate}
    \item Considering our definition of factional belief from the perspective of an infinite hierarchy of reasoning, akin to the initial definition for common knowledge that we provided (\ref{def:hierarchy}), with regard to the $\mu$ fraction of agents at each step in the hierarchy, it is not required by our definition that this be the same $\mu$ fraction of agents at each step in the hierarchy. However, it is not clear whether or not this should always be the case. Is it possible to describe an event that is common $(p, \ \mu)$-belief, but for which the infinite hierarchy refers to a different $\mu$ fraction of agents at some step? If so, what are the consequences of this for strategic coordination and cooperation?
    
    \item It is not too difficult to modify the setting described in Section 4 to create a model where the underlying event that supports revolt does not necessarily neatly reduce to a single event that is common $(p, \ \mu)$-belief. For example, if there are multiple types of conditionally-revolting agents with different $p$ and $\mu$ thresholds, the event supporting revolt is more like a $\mu$ fraction of agents believe with sufficient probability that their thresholds are satisfied. This event encompasses common $p$-beliefs among certain agents of the same type regarding their $\mu$ thresholds, but not precisely common $(p, \ \mu)$-beliefs. Is there a more general definition of factional belief that allows for the existence of different thresholds for different types of agents to still be encompassed in a single event that is a factional belief among all of the agents who feel satisfied enough to revolt?  
\end{enumerate}
We believe that answering these questions could yield further insight into factional belief that could inform its application in other settings.

%
%
%
%
%
%

\newpage
\bibliography{relaxing_common_belief}

\newpage
\appendix
\section{Proof of Proposition~\ref{prop:one}}
\label{appendix:proof_prop2.4}
Recall that ($\Omega$, $\Sigma$, $\Pr$) is a probability space and that $I$ denotes a set of agents. For our proof below, we borrow the following properties of the belief operator $B_i^{p}$, from Monderer and Samet \cite{Monderer1989}. The properties are numbered for consistency with their paper. 

\begin{proposition}[Proposition 2 from Monderer and Samet \cite{Monderer1989}]
\hfill

For each $0 \leq p \leq 1$, $i \in I$, and $E, F \in \Sigma$,
\begin{align}
    & B_i^{p}(B_i^{p}(E)) = B_i^{p}(E). \tag{6}\\
    & \text{If $E \sse F$, then $B_i^{p}(E) \sse B_i^{p}(F)$.} \tag{7}\\
    & \text{If $\left(E^n\right)$ is a decreasing sequence of events, then }
    B_i^{p}\left(\cap_n E^n\right) = \cap_n B_i^{p}(E^n). \tag{8}
\end{align}
\end{proposition}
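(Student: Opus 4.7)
The plan is to derive all three properties from two foundational observations. First, since $\Pi_i$ is a countable partition of $\Omega$ into measurable sets of positive probability, for each $\omega$ the map $A \mapsto \Pr[A \mid \Pi_i(\omega)]$ is a well-defined probability measure on $(\Omega, \Sigma)$, hence in particular monotone and continuous from above along decreasing sequences. Second, the set $B_i^p(E)$ is a union of cells of $\Pi_i$: whether $\omega \in B_i^p(E)$ depends only on $\Pi_i(\omega)$, because $\omega' \in \Pi_i(\omega)$ implies $\Pi_i(\omega') = \Pi_i(\omega)$ and hence the defining inequality $\Pr[E \mid \Pi_i(\omega')] \geq p$ is identical to $\Pr[E \mid \Pi_i(\omega)] \geq p$. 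So $B_i^p(E)$ is $\Pi_i$-measurable, meaning every cell of $\Pi_i$ is either contained in $B_i^p(E)$ or disjoint from it.

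With these in hand, property (7) is immediate from monotonicity of the conditional probability measure: if $E \sse F$ and $\Pr[E \mid \Pi_i(\omega)] \geq p$, then $\Pr[F \mid \Pi_i(\omega)] \geq \Pr[E \mid \Pi_i(\omega)] \geq p$, so $\omega \in B_i^p(F)$. For property (6), fix $\omega$ and write $G = \Pi_i(\omega)$. By the second observation above, either $G \sse B_i^p(E)$ or $G \cap B_i^p(E) = \emptyset$. In the first case $\Pr[B_i^p(E) \mid G] = 1 \geq p$, and in the second case $\Pr[B_i^p(E) \mid G] = 0$. So, at least when $p > 0$, we have $\omega \in B_i^p(B_i^p(E))$ iff $G \sse B_i^p(E)$ iff $\omega \in B_i^p(E)$, which is exactly (6). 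The boundary case $p = 0$ is handled separately by noting that $B_i^0(A) = \Omega$ for every $A \in \Sigma$, so both sides of (6) equal $\Omega$.

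For property (8), note that $\cap_n E^n \sse E^n$ for every $n$, so by (7), $B_i^p(\cap_n E^n) \sse B_i^p(E^n)$ for each $n$, giving the inclusion $B_i^p(\cap_n E^n) \sse \cap_n B_i^p(E^n)$. For the reverse inclusion, let $\omega \in \cap_n B_i^p(E^n)$, so that $\Pr[E^n \mid \Pi_i(\omega)] \geq p$ for every $n$. Since $(E^n)$ is a decreasing sequence of measurable events and $\Pr[\cdot \mid \Pi_i(\omega)]$ is a probability measure, continuity from above gives $\Pr[\cap_n E^n \mid \Pi_i(\omega)] = \lim_n \Pr[E^n \mid \Pi_i(\omega)] \geq p$, so $\omega \in B_i^p(\cap_n E^n)$.

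I expect the main thing to double-check is the legitimacy of the continuity-from-above step in (8): one needs that $\Pr[\cdot \mid \Pi_i(\omega)]$ is genuinely countably additive, which is where the hypothesis that $\Pi_i(\omega)$ has positive probability is doing real work (it rules out ill-defined conditioning), and that each $E^n$ lies in $\Sigma$ so the conditional probabilities are defined. Aside from that subtlety, each of (6), (7), and (8) is essentially a one-line consequence of the setup.
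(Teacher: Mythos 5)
Your proof is correct, but note that the paper does not actually prove this proposition at all: it is imported verbatim as Proposition~2 of Monderer and Samet \cite{Monderer1989} and used as a black box in Appendix~A, so there is no in-paper argument to compare against. Your two foundational observations --- that $\Pr[\,\cdot \mid \Pi_i(\omega)]$ is a genuine (countably additive) probability measure because each cell has positive probability, and that $B_i^p(E)$ is a union of cells of $\Pi_i$ --- are exactly the right levers, and they are the same ones Monderer and Samet use. Each of (6), (7), (8) then follows as you say: (7) from monotonicity, (6) from the zero--one dichotomy of $\Pr[B_i^p(E)\mid G]$ on cells (with your correct separate handling of $p=0$), and (8) from continuity from above. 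The only point worth making explicit is that $B_i^p(E)$ is itself in $\Sigma$, so that $B_i^{p}(B_i^{p}(E))$ and the conditional probabilities in (6) are well defined; this follows because $\Pi_i$ is a \emph{countable} partition into measurable cells (as the paper notes in Section~2), so the union of cells constituting $B_i^p(E)$ is a countable union of measurable sets. With that one sentence added, the argument is complete and self-contained.
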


Given the above proposition, the proof closely follows Monderer's and Samet's proof of their Proposition 3:
\begin{proof}[Proof of Proposition~\ref{prop:one}] 
The proof is split into two arguments, one for each of the conclusions of the proposition:
\vspace{1 em}

\begin{enumerate}
    \item First, we show that $\left(F_{\mu}^n\right)_{n = 1}^{\infty}$ is a decreasing sequence:
    
    \vspace{1 ex}
    
    By definition of $F_{\mu}^n$, for each $n \geq 1$, there exists $J \sse I$ with $|J| \geq \mu |I|$ such that  $\forall j \in J$, $F_{\mu}^n \sse B_{j}^p(F_{\mu}^{n-1})$. 
    
    \vspace{1 ex}
    
    As a result, by (6) and (7), 
    \[
        \exists \, J \sse I \text{ with $|J| \geq \mu |I|$ s.t. $\forall j \in J$, } B_j^p(F_{\mu}^n) \sse B_{j}^p(B_{j}^p(F_{\mu}^{n-1})) = B_{j}^p(F_{\mu}^{n-1}).
    \]
    
    So, for all $n \geq 1$, we have 
    \[ 
        F_{\mu}^{n+1} \sse \bigcap_{j \in J} B_j^p(F_{\mu}^n) \sse \bigcap_{j \in J} B_j^p(F_{\mu}^{n-1}) \sse F_{\mu}^{n}.
    \]
    
    \vspace{1 ex}
    
    Now, for all $n \geq 1$, let $j \in J$. Then, we have 
    \[
        E^{p, \mu}(F) \sse F_{\mu}^{n + 1} \sse B_j^p(F_{\mu}^n).
    \]
    Therefore, $E^{p, \mu}(F) \sse \bigcap_{n \geq 1} B_j^p(F_{\mu} ^n)$, which by (8), implies 
    \[ 
        E^{p, \mu}(F) \sse  B_j^p\left(\bigcap_{n \geq 1}F_{\mu}^n \right) = B_j^p\left(E^{p, \mu}(F)\right).
    \]
    
    This shows that $E^{p, \mu}(F)$ is an evident $(p, \ \mu)$-belief.
    
    \vspace{1 ex}
    
    It also follows that $E^{p, \mu}(F) \sse F_{\mu}^1 \sse B_{j}^p(F)$ for all $j \text{ in some } J \sse I$ with $|J| \geq \mu|I|$.
    
    \vspace{1 ex}
    
    \item If $\omega \in E^{p, \mu}(F)$, then by (1), $F$ is common $(p, \ \mu)$-belief at $\omega$.
    
    \vspace{1 ex}
    
    On the other hand, suppose that $F$ is common $(p, \ \mu)$-belief at $\omega$. 
    
    \vspace{1 ex}
    
    Let the event $E$ (with $\omega \in E$) and $J \sse I$ be as in the definition of common $(p, \ \mu)$-belief. 
    
    \vspace{1 ex}
    
    We will show, by induction, that $E \sse F_{\mu}^n$ for all $n \geq 1$. 
    
    \vspace{1 ex}
    
    For the base case, it follows from the definition of common $(p, \ \mu)$-belief that for each $j \in J$, $E \sse B_j^p(F)$, which implies that $E \sse F_{\mu}^1$.
    
    \vspace{1 ex}
    
    Now, suppose $E \sse F_{\mu}^n$. Then using the fact that $E$ is an evident $(p,\mu)$-belief, there must exist some $J' \sse I$ with with $|J| \geq \mu |I|$ such that for each $j \in J'$, $E \subseteq B_j^{p}(E)$.
    
    \vspace{1 ex}
    
    Applying (7) for each $j \in J'$ we have that $E \sse B_j^p(E) \sse B_j^p(F_{\mu}^n)$ for each $j \in J'$,
    
    which implies that 
    \[
        E \sse \bigcap_{j \in J'} B_{j}^p(F_{\mu}^n) \sse F_{\mu}^{n + 1}.
    \]
\end{enumerate}
\end{proof}

\section{Proof of Proposition~\ref{prop:hardness}}
\label{appendix:hardness}
\begin{proof}
We will reduce a classic NP-complete problem, \textsc{Clique}, to \textsc{Revolt}. 

\begin{definition}[{\textsc{Clique}}] \hfill

\textbf{Given:} $(G, k)$, where $G = (V, E)$ is an (undirected) graph with $|V| = n$ and $k \in \mathbb{N}_{ > 0}$.

\textbf{Question:} Does $G$ contain a clique of size $k$?
\end{definition}

Given an instance of \textsc{Clique} $(G, k)$, we define a prior $P$, summarized in the tables below:

\vspace{1 ex}

\begin{table}[h]
\begin{subtable}[t]{.48\textwidth}
   \centering
   \caption{Summary of the parameters in the prior $P$.}
   {\renewcommand{\arraystretch}{1.25}
\begin{tabular}{ |c|c| }
 \hline
 Parameter & Value\\
 \hline
  $p$ & $1$ \\
  \hline
  $\mu$ & $\frac{k}{n}$\\
  \hline
\end{tabular}
}
   \label{tab:prior_params}
\end{subtable}
\begin{subtable}[t]{.48\textwidth}
   \centering
   \caption{Summary of the distributions in the prior $P$.}
   {\renewcommand{\arraystretch}{1.25}
\begin{tabular}{ |c|c|c| } 
 \hline
  & $s = A$ & $s = B$ \\
  \hline
  $\Pr[\text{State} = s]$ & $\frac{1}{2}$ & $\frac{1}{2}$\\
  \hline
  $\Pr[\alpha | \text{State} = s]$ & 0 & 0\\
  \hline
  $\Pr[\chi | \text{State} = s]$ & $0.99$ & $0$\\
  \hline
  $\Pr[\nu | \text{State} = s]$ & $0.01$ & $1$\\
 \hline
\end{tabular}
}
   \label{tab:D_Ts}
\end{subtable}
\end{table}

Then, we consider an instance of \textsc{Revolt} with input $(G, P, \mu^* = \frac{k}{n}, q^* = \frac{0.99^k}{2})$. (Clearly, defining this instance can be done in polynomial time with respect to $n$.)

Now, suppose that $G$ contains a clique of size $k$. Then, with probability at least $q^*$, all of the agents that form the clique will be of type $\chi$. When this is the case, they will each observe the types of their neighbors, and will believe with probability 1 that all $k$ agents that form the clique (including themselves) feel secure enough to revolt. As a result, a revolt of size at least $\frac{k}{n} = \mu^*$ is supported with probability at least $q^*$ in $G$ under the prior $P$.

On the other hand, suppose that a revolt of size at least $\mu^*$ is supported with probability at least $q^*$ in $G$ under the prior $P$. That is, suppose the event that at least $k$ agents feel secure enough to revolt occurs (call this event $E$) with probability at least $q^* > 0$. 

For $E$ to occur, at least $k$ agents of type $\chi$ must believe with probability 1 that at least $k - 1$ other agents (of type $\chi$) feel secure enough to revolt. This is only possible, though, when the existence of at least $k$ agents of type $\chi$ is common knowledge among a group of at least $k$ $\chi$-type agents. 

As a result of the limited information available to each agent, common knowledge of the existence of $k$ agents of type $\chi$ can only occur within a clique of $k$ agents. This follows from that fact that in order to be certain of another agent's reasoning about some observed information (here, the type of some agent or agents), an agent $i$ must observe that the other agent $j$ observes the same information and also observe that $j$ observes that $i$ observes the information, and so on.

In order for $E$ to occur, this kind of mutual reasoning must happen among $k$ $\chi$-type agents, each of whom would need to be a neighbor of the other $k - 1$ to facilitate the mutual observation of each others' types. That is, they would need to form a clique of $k$ agents.\footnote{See Chwe's paper \cite{Chwe1999} for a similar setting where cliques are necessary for common knowledge.} $E$ occurs with non-zero probability, so there must exist a clique of size $k$ in $G$.
\end{proof}

We note here that in addition to establishing NP-hardness, this proof demonstrates that in order to achieve the exactness required to solve \textsc{Revolt}, an algorithm must use more information than just agent degrees, which, as we show, are sufficient for the relaxation \textsc{Promise Revolt}. 

We also conjecture that \textsc{Revolt} is $\#\text{P}$-hard.

\section{A Chernoff-Hoeffding Bound for Settings with Local Dependence}
\label{appendix:dependent_bound}
For a standard Chernoff-Hoeffding Bound, we consider taking the sum of independent random variables. However, similar bounds exist for sums of dependent random variables when the dependence between the variables is constrained. Borrowing from a book on concentration of measure by Dubhashi and Panconesi \cite{Dubhashi2009}, we consider the case where dependencies between the variables are encoded in a \textit{dependency graph} $\Gamma$.

Formally, let $\Gamma$ be a graph on $n$ vertices such that for each vertex $i \in [n]$, the following property holds: if $i$ is not adjacent to a distinct vertex $j \in n$ in $\Gamma$, then $X_i$ is independent from $X_j$.

Let $\chi^*(\Gamma)$ denote the \textit{fractional chromatic number} of $\Gamma$, which represents the smallest possible ratio $\frac{a}{b}$ of positive integers $a, b$ such that the vertices of $\Gamma$ can each be assigned a set of $b$ colors from a total set of $a$ colors under the constraint that adjacent vertices must be assigned disjoint sets of colors. (A formal definition of the fractional chromatic number can be found in Dubhashi and Panconesi \cite{Dubhashi2009}.)

\begin{theorem}[Theorem 3.2 in Dubhashi and Panconesi \cite{Dubhashi2009}]
\hfill

 Suppose $X = \sum_{i = 1}^n X_{i}$ where $0 \leq X_{i} \leq 1$ for each $i$. Then, for $t > 0$,
\[
    \Pr[X \geq \E[X] + t], \Pr[X \leq \E[X] - t] \leq \exp \left(\frac{-2t^2}{\chi^*(\Gamma) n}\right).
\]

\label{theorem:dependent_bound}
\end{theorem}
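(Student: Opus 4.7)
The plan is to exploit the fractional chromatic number of the dependency graph to reduce the problem to a collection of independent sums and then reassemble the pieces via a convexity argument that keeps careful control of the constants.

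First, I would unpack the LP defining $\chi^*(\Gamma)$: it produces independent sets $I_1, \ldots, I_k$ in $\Gamma$ and nonnegative weights $y_1, \ldots, y_k$ with $\sum_j y_j = \chi^*(\Gamma)$ and $\sum_{j \,:\, i \in I_j} y_j = 1$ for every vertex $i \in [n]$. Setting $\tilde X := X - \E[X]$ and $\tilde Y_j := \sum_{i \in I_j}(X_i - \E[X_i])$, this cover yields the identity $\tilde X = \sum_j y_j \tilde Y_j$. Because $I_j$ is independent in $\Gamma$, the summands of $\tilde Y_j$ are mutually independent, so Hoeffding's lemma gives the clean per-set MGF bound $\E[\exp(s \tilde Y_j)] \le \exp(s^2 |I_j|/8)$ for every real $s$.

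Next, I would bound the MGF of $\tilde X$ via a weighted Jensen step. For any probability weights $p_1, \ldots, p_k > 0$, convexity of $\exp$ gives
\[
\exp(\lambda \tilde X) \;=\; \exp\!\Bigl(\sum_j p_j \cdot \tfrac{y_j \lambda}{p_j}\, \tilde Y_j\Bigr) \;\le\; \sum_j p_j \exp\!\Bigl(\tfrac{y_j \lambda}{p_j}\, \tilde Y_j\Bigr)
\]
pointwise. Taking expectations and invoking the Hoeffding bound above then yields $\E[\exp(\lambda \tilde X)] \le \sum_j p_j \exp\!\bigl((y_j \lambda / p_j)^2 |I_j|/8\bigr)$. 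The key move is to choose the $p_j$ so that every exponent coincides: the choice $p_j = y_j \sqrt{|I_j|}/S$, where $S := \sum_j y_j \sqrt{|I_j|}$, sends every exponent to the common value $\lambda^2 S^2/8$, so the weighted sum collapses and $\E[\exp(\lambda \tilde X)] \le \exp(\lambda^2 S^2 /8)$.

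To turn $S^2$ into $\chi^*(\Gamma) \cdot n$, I would apply Cauchy-Schwarz: writing $y_j \sqrt{|I_j|} = \sqrt{y_j} \cdot \sqrt{y_j |I_j|}$ gives $S^2 \le \bigl(\sum_j y_j\bigr)\bigl(\sum_j y_j |I_j|\bigr) = \chi^*(\Gamma) \cdot n$, where the second equality is the double-counting identity $\sum_j y_j |I_j| = \sum_i \sum_{j \,:\, i \in I_j} y_j = n$. A standard Chernoff-Markov step finishes the upper tail: $\Pr[\tilde X \ge t] \le \exp(-\lambda t + \lambda^2 \chi^*(\Gamma) n /8)$, minimized at $\lambda = 4t/(\chi^*(\Gamma) n)$, giving the bound $\exp(-2t^2/(\chi^*(\Gamma) n))$. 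The lower tail follows by applying the same argument to $-\tilde X$.

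The main obstacle I expect is getting the correct exponent. A naive Jensen weighting such as $p_j = y_j/\chi^*(\Gamma)$ would yield only $\chi^*(\Gamma)^2 n$ in the denominator, which is quadratically worse in $\chi^*(\Gamma)$ than the desired bound. The balancing choice $p_j \propto y_j \sqrt{|I_j|}$, together with the Cauchy-Schwarz estimate, is exactly what trades one factor of $\chi^*(\Gamma)$ for the $n$ that recovers the classical Hoeffding constant in the independent case (where $\chi^*(\Gamma) = 1$).
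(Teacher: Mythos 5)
Your proof is correct. Note that the paper does not prove this statement at all---it is imported verbatim as Theorem 3.2 of Dubhashi and Panconesi, so there is no internal proof to compare against; what you have written is essentially the standard argument (due to Janson) behind that cited result: decompose $X - \E[X]$ along an exact fractional cover by independent sets, bound each independent block's moment generating function via Hoeffding's lemma, recombine with Jensen using the balancing weights $p_j \propto y_j\sqrt{|I_j|}$, and apply Cauchy--Schwarz together with the double-counting identity $\sum_j y_j |I_j| = n$ to land on the exponent $-2t^2/(\chi^*(\Gamma)\, n)$. The only point worth flagging is that the fractional chromatic number LP guarantees $\sum_{j\,:\,i \in I_j} y_j \ge 1$ rather than $=1$; one should add the standard remark that an over-covered vertex can be handled by splitting an independent set containing it into subsets (subsets of independent sets remain independent), so an exact cover of the same total weight always exists.
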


\end{document}